\newtheorem{theorem}{Theorem}
\newtheorem{definition}{Definition}
\newtheorem{proposition}{Proposition}
\newtheorem{lemma}{Lemma}
\newtheorem{corollary}{Corollary}
\newtheorem{remark}{Remark}
\newtheorem{claim}{Claim}
\newtheorem{conjecture}{Conjecture}
\newtheorem{question}{Question}
\newcommand{\tiling}{\mathcal{T}}
\newcommand{\polygon}{\rho} 
\newcommand{\chain}{\chi}
\newcommand{\epath}{\mathbf{p}} 
\newcommand{\edge}{\vec{e}}
\newcommand{\patch}{P}
\newcommand{\adj}{\sim} 
\newcommand{\vadj}{\sim_{\mathrm{v}}} 
\newcommand{\neighbour}[1]{\mathcal{N}(#1)} 
\newcommand{\vneighbour}[1]{\mathcal{N}_{\mathrm{v}}(#1)} 
\newcommand\sites{V} 
\newcommand\configurations{X}
\newcommand\bootstrapCA{\ensuremath{F}}
\newcommand{\invasion}{I}
\newcommand{\chainzero}{C}
\newcommand{\blocking}{B}
\newcommand{\enclosed}[1]{E_{#1}}
\newcommand{\pcriticalzero}{p_c^0}
\newcommand{\pcriticalone}{p_c^1}
\newcommand{\MNVPC}{\texttt{MNVPC}}
\newcommand{\ZZ}{\mathbb{Z}}
\newcommand{\NN}{\mathbb{N}}
\newcommand{\RR}{\mathbb{R}}
\renewcommand{\epsilon}{\varepsilon}
\title{Bootstrap percolation on rhombus tilings}
\date{2024}
\author{S. Esnay \and V. Lutfalla \and G. Theyssier}
\begin{document}

\maketitle

\begin{abstract}
  2-boostrap percolation on a graph is a diffusion process where a vertex gets infected whenever it has at least 2 infected neighbours, and then stays infected forever.
  It has been much studied on the infinite grid for random Bernoulli initial configurations, starting from the seminal result of van Enter that establishes that the entire grid gets almost surely entirely infected for any non-trivial initial probability of infection.

  In this paper, we generalize this result to any adjacency graph of any rhombus tiling of the plane, including aperiodic ones like Penrose tilings. We actually show almost sure infection of the entire graph for a larger class of measure than non-trivial Bernoulli ones.

  Our proof strategy combines a geometry toolkit for infected clusters based on chain-convexity, and uniform probabilistic bounds on particular geometric patterns that play the role of 0-1 laws or ergodicity, which are not available in our settings due to the lack of symmetry of the graph considered.
\end{abstract}

\section{Introduction}
\label{sec:intro}

Bootstrap percolation was first introduced by \cite{chalupa1979} as a simplified model of a magnetic system progressively loosing its magnetic order\footnote{Due to initial non-magnetic impurities and competition between exchange interactions (which tends to align atomic spins) and crystal fields interactions (which tends to force atoms into a singlet state, and thereby suppressing magnetic moments).}.
When abstracted away from this physical modeling, it can be seen as a simple growing or diffusion process on a graph: at each time step, any vertex with $m$ or more \emph{infected} neighbours becomes infected, and infected vertices remain infected forever.
As in classical percolation theory, this system is usually studied on random initial condition (each vertex is initially infected with some probability $p$), and the main question is to determine whether the entire graph will be infected almost surely (depending on $p$).

This process has been first studied on the bi-dimensional grid where the case ${m=2}$ is the most interesting one. 
\cite{vanenter1987} proved in that case that whenever ${p>0}$, the entire graph is infected almost surely. 
Then \cite{aizenman1988} and later \cite{holroyd2003} obtained sharp thresholds on $p$ for finite grids as a function of their size.
Still on the grid (in dimension two or higher), these seminal results were considerably generalized to a large class of monotone cellular automata called U-bootstrap and introduced in \cite{bollobas2015} (see \cite{morris2017} for a detailed survey of this family and \cite{bollobas2023,balister2023} for a taste of the most recent results).

On the other hand, bootstrap percolation was studied on many different graphs: finite ones like in \cite{gravner2017,gravner2015}, tilings by regular polygons \cite{bushaw2018} where critical probabilities are trivial ($0$ or $1$) like on the infinite grid, or hyperbolic lattices \cite{sausset2009} or regular trees and Cayley graphs of non-amenable groups \cite{biskup2009,balogh2006} where non-trivial critical probabilities were established.
A common aspect to these works (and to most of the literature to our knowledge) is that the graphs considered are highly symmetric which gives, on one hand, powerful probabilistic tools for tackling the problem (0-1-laws, ergodicity, etc), and, on the other hand, a nicer and more uniform geometry to infected clusters.

There is a general observation (or belief) in statistical mechanics that many features of studied systems only depend on a few parameters and should be independent of the details of the system.
This is often referred to as universality, but there is no rigorous mathematical definition of the concept to our knowledge. More modestly and back to the precise context of bootstrap percolation, one could ask how robust is the seminal result of \cite{vanenter1987} when modifying the underlying graph. First, as mentioned above, there is a dramatic change of behavior when going from the grid to trees, Cayley graphs of non-amenable groups or hyperbolic graphs (non-critical probabilities arise).
But what is special about the regular grid $\ZZ^2$ among planar graphs that are quasi-isometric to it?
what is the importance of symmetry (in particular translation invariance) that allows to make concise proofs?

In this paper, we consider boostrap percolation on any adjacency graph of any rhombus tiling  of the plane with finitely many tile types up to translation, including the regular grid but also highly non-symmetric ones like Penrose tilings.
Our main result is a generalization of the seminal result of \cite{vanenter1987} to this entire class of graphs, precisely: on the adjacency graph of any rhombus tiling, for any non-zero initial probability of infection, the entire graph will get infected almost surely.

Classical percolation was already studied on Penrose tilings in \cite{hof1998}, where the lack of translation invariance and symmetry is tackled by introducing an ergodic measure that averages over all possible Penrose tilings. Our approach is different and gives results for bootstrap percolation on any single tiling without averaging. We generalize Enter's argument to this setting by combining a geometrical analysis of blocked infected cluster based on chain-convexity (which are simply rectangles in the case of the grid) and uniform probability upper bounds on blocked cluster events that bypass the lack of translation invariance (whereas thanks to ergodicity in the case of the grid, it is enough to bound the probability of a blocked infected cluster containing the origin).

\paragraph*{Contents of the paper.} In section \ref{sec:bootstrap-intro}, we introduce bootstrap percolation cellular automata on any graph, the main problem, and some basic probabilistic notions and ingredients to be used later. In section~\ref{sec:rhombus}, we introduce rhombus tiling and some of their geometric features. The two previous objects introduced separately meet in section~\ref{sec:mainresult}, where we establish our main result in two steps: first, by a detailed analysis of infection clusters based on chain-convexity (sub-section \ref{sec:geometrical_elements}), and then, by a probabilistic argument applied to specific geometric objects previously identified (sub-section~\ref{sec:probargs}).
In the appendices we present two examples that limit the generalization of our result to other graphs and percolation processes.
In Appendix \ref{appendix:nonzeroone} we show that 2-neighbour percolation does not have a critical threshold (or $0-1$) behaviour on the adjacency graphs of arbitrary quadrilater tilings.
In Appendix \ref{appendix:perco-rhombus} we show that arbitrary percolation processes (here a variant of oriented bootstrap percolation) do not have a critical threshold behaviour on the adjacency graphs of arbitrary rhombus tilings.

\section{Bootstrap cellular automaton and probability measures}
\label{sec:bootstrap-intro}

In this section, we introduce the bootstrap cellular automaton on an arbitrary abstract graph and present basic facts around probability measures (to be used later in Section~\ref{sec:probargs}).

Given a graph $G=(\sites,E)$, a \emph{configuration} is a map giving a value $0$ or $1$ to each vertex.
We denote by ${\configurations = \{0,1\}^\sites}$ the set of configurations.
If $c\in\configurations$ and ${v\in\sites}$, we use notations $c(v)$ and $c_v$ interchangeably.
The set $X$ can be endowed with the pro-discrete topology (product of the discrete topology over ${\{0,1\}}$) which is generated by the collection of cylinder sets for $D\subseteq\sites$ a finite domain of $\sites$ and ${P:D\to\{0,1\}}$ a partial configuration of domain $D$:
\[[P] = \{c\in\configurations \mid \forall t\in D, c(t)=P(t)\}.\]
$X$ endowed with this topology is compact.
We will often use the notation shortcut $q^D$ for ${q\in\{0,1\}}$ and ${D\subseteq\sites}$ to denote the constant partial configuration equal to $q$ on domain $D$. 
Therefore ${[q^D]}$ denotes the set of configurations in state $q$ on domain $D$.

In this paper, we are interested in a dynamical process acting on configurations which is a particular cellular automaton often called \emph{bootstrap percolation} \cite{chalupa1979,vanenter1987,balogh2006}.

The $m$-neighbour contamination or $m$-bootstrap cellular automaton ${F:\configurations\to\configurations}$ for some integer $m$ is defined as follows:
\[F(c)_v =
  \begin{cases}
    1 &\text{ if $c_v=1$ or $\bigl|\{v' \mid (v',v)\in E\text{ and }c_{v'}=1\}\bigr|\geq m$},\\
    0 &\text{ else,}
  \end{cases}
\]
for any configuration ${c\in X}$ and any vertex ${v\in\sites}$.

This cellular automaton is both freezing and monotone (see \cite{salo2022}), which means that:
\begin{enumerate}
\item ${F(c)_v\geq c_v}$ for any ${c\in\configurations}$ and ${v\in\sites}$ (freezing),
\item ${F(c')_v\geq F(c)_v}$ for all ${v\in\sites}$ whenever ${c'_v\geq c_v}$ for all ${v\in\sites}$ (monotone).
\end{enumerate}

The freezing property lets us see $F$ as a growing process: starting from any initial configuration seen as a set of selected vertices of value $1$, $F$ can only make this set increase with time.
In particular, for any ${c\in\configurations}$ and any ${v\in\sites}$, the sequence ${\bigl(F^n(c)_v\bigr)_n}$ is ultimately constant of limit value $c^\infty_v$. Equivalently, the sequence of configurations ${\bigl(F^n(c)\bigr)_n}$ always converges and its limit is precisely $c^\infty$.

We will focus on the set $\invasion$ of initial configurations that invade the entire graph during this growing process, formally:
\[\invasion = \bigl\{x\in\configurations \mid \forall v\in\sites, \exists n\in\NN, F^n(c)_v = 1\bigr\}.\]

Clearly not all configurations lie in $\invasion$.
We want to understand how big $\invasion$ is.
Topologically, it is a $G_\delta$ set (intersection over ${v\in\sites}$ of sets ${\cup_n\{c \mid F^n(c)_v=1\}}$ which are open), but following motivations from statistical physics and percolation theory, we will study $\invasion$ through probability measures.

A Borel probability measure is uniquely determined by its value on cylinders (by Carathéodory-Fréchet extension theorem, see \cite[Theorem 0.5]{walters1981}).
We mainly consider \emph{Bernoulli measures} $\mu_p$ which are product measures determined by a parameter $p\in[0,1]$ giving the ``probability of having a $1$ at one vertex'', as follows:
\[\mu_p([P]) = \prod_{v\in D, P(v)=1} p \ \ \times\prod_{v\in D, P(v)=0}(1-p).\]

The monotony property of $F$ mentioned above implies that the measure of $I$ under Bernoulli measures can only increase with $p$: ${\mu_p(I)\leq \mu_{p'}(I)}$ whenever ${p\leq p'}$ (this can be shown by a straightforward coupling argument).
Since ${\mu_0(\invasion)=0}$ and ${\mu_1(\invasion)=1}$, there are two critical values of the parameter $p$ to consider: ${\pcriticalzero = \sup \{p \mid \mu_p(\invasion)=0\}}$ and ${\pcriticalone = \inf\{p \mid \mu_p(\invasion)=1\}}$ which verify ${\pcriticalzero\leq \pcriticalone}$.

\begin{remark}\label{rem:ergodicity}
  In many proofs, an event is shown to be of probability either $0$ or $1$ for a given measure $\mu$ using 0-1 laws.
  
  One of them is the Kolmogorov 0-1-law; but it applies to tail events, which $\invasion$ is not: it might depend on a finite portion of the configuration.
  
  Another of them is that, if the graph $G$ has enough symmetry, Bernoulli measures follow a 0-1 law on Borel sets of configurations that preserve these symmetries.
  
  More precisely, denoting $\Gamma$ the group of automorphisms of graph $G$, if the action of $\Gamma$ on $G$ has an infinite orbit (intuitively meaning that infinitely many vertices of $G$ are indistinguishable) and if $G$ is locally finite, then, for any Bernoulli measure $\mu$ and any set $X$ of configurations which is invariant under $\Gamma$ (\textit{i.e.}, ${c\in X}$ and $\phi\in\Gamma$ implies ${c\circ\phi^{-1}\in X}$), we have ${\mu(X)\in\{0,1\}}$ (see \cite[Lemma 1, chapter 5]{bollob_s2006}).
  
  Whatever the graph $G$, the set $\invasion$ is always invariant under $\Gamma$, simply because ${F(c)\circ\phi^{-1}=F(c\circ\phi^{-1})}$ for any configuration $c$ and $c\in I$ is determined by a uniform constraint on all vertices: ${\forall v\in\sites, c^\infty(v)=1}$.
  Consequently, for this 0-1 law to hold on $\invasion$, the only requirement is for the graph to have enough symmetry. It is the case in particular for Cayley graphs of infinite groups (typically $\ZZ^2$), and is often expressed as the fact that the action of translation with Bernoulli measure is ergodic \cite[Theorem 1.12]{walters1981}.
\end{remark}

Following remark~\ref{rem:ergodicity}, in the case of a graph with enough symmetry, ${\mu_p(I)\in\{0,1\}}$ for all $p$ so that ${\pcriticalzero=\pcriticalone}$.
We will however consider graphs that have few or no symmetry and there is a priori no reason to expect this 0-1 law to hold (see counter-example in appendix \ref{appendix:nonzeroone} and Proposition~\ref{prop:bootstrapnonzeroone} therein).

Furthermore, this 0-1 law uses Bernoulli measures: percolation theory often focuses on these measures to study critical values of parameter $p$.
Here, we will actually consider a larger class of measures in Section~\ref{sec:mainresult} that are not necessarily product measures but are sufficiently well-behaved.

Our goal here is not to look for abstract generality, but rather to make explicit and clear the requirements we use in the probabilistic arguments of our main result (Section~\ref{sec:probargs}).
There are three such requirements that we detail below: bounded correlations (Markov property), non-vanishing probability of patches of 1 of any fixed size, and positive correlation of upward-closed sets. 

Fix some integer ${k\in\NN}$. We say that a measure $\mu$ is \emph{$k$-Markov} if, for any finite ${D\subseteq V}$ and any set ${X\subseteq V}$ which is at distance at least $k$ from $D$ in graph $G$ and any patterns ${u:D\to\{0,1\}}$ and ${P:X\to\{0,1\}}$, the following holds:
\[\mu([u]\cap [P]) = \mu([u])\mu([P]).\]

In addition, we say a measure is \emph{non-vanishing} if for all $r$ there is some constant ${\alpha>0}$ such that for all ${v\in V}$, ${\mu([1^{B(v,r)}])\geq\alpha}$ where ${B(v,r)}$ is the ball of radius $r$ centered in $v$ in graph $G$.
Finally, a set of configurations ${X\subseteq\configurations}$ is \emph{upward-closed} if, whenever $x\in X$  and ${x_v\leq y_v}$ for all ${v\in V}$, then ${y\in X}$.

A measure $\mu$ is called \MNVPC{} (Markov-Non Vanishing-Positively Correlated) if it is $k$-Markov for some $k$, non-vanishing, and any pair $X,Y$ of upward-closed Borel sets are positively correlated, \textit{i.e.} ${\mu(X\cap Y)\geq\mu(X)\mu(Y)}$.

\begin{lemma}
  Any non-trivial Bernoulli measure on a bounded degree graph is \MNVPC{}.
\end{lemma}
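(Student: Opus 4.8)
The plan is to verify the three defining properties of a \MNVPC{} measure separately for a non-trivial Bernoulli measure $\mu_p$, i.e. with ${p\in(0,1)}$, on a graph of maximum degree ${\Delta<\infty}$.

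For the Markov property, I would simply exploit that $\mu_p$ is by definition a product measure, hence factorizes over any two disjoint finite supports. Since two sets ${D,X\subseteq\sites}$ at graph distance at least $1$ are necessarily disjoint, the identity ${\mu_p([u]\cap[P])=\mu_p([u])\mu_p([P])}$ holds for any patterns with such supports, so $\mu_p$ is $k$-Markov with ${k=1}$. For non-vanishing, bounded degree yields a uniform bound ${|B(v,r)|\leq N_r}$ with $N_r$ depending only on $\Delta$ and $r$ (for instance ${N_r\leq 1+\Delta\sum_{i=0}^{r-1}(\Delta-1)^i}$), independent of $v$. Since ${\mu_p([1^{B(v,r)}])=p^{|B(v,r)|}}$ and ${p<1}$, monotonicity of ${n\mapsto p^n}$ gives ${\mu_p([1^{B(v,r)}])\geq p^{N_r}=:\alpha>0}$ uniformly in $v$, as required.

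The substantive part, and the main obstacle, is positive correlation, which is exactly the Harris (FKG) inequality for product measures. For events depending on finitely many coordinates this is classical: one proves by induction on the number of coordinates that ${\mathbb{E}[fg]\geq\mathbb{E}[f]\mathbb{E}[g]}$ for increasing bounded functions $f,g$ under a product measure, the base case being the elementary observation that ${(f(1)-f(0))(g(1)-g(0))\geq 0}$. The difficulty is to upgrade this to arbitrary upward-closed Borel sets $X,Y$ on an infinite graph. Here I would argue by approximation: fix an enumeration ${v_1,v_2,\dots}$ of $\sites$, let $\mathcal{F}_n$ be the $\sigma$-algebra generated by ${v_1,\dots,v_n}$, and set ${f_n=\mathbb{E}_{\mu_p}[\mathbf{1}_X\mid\mathcal{F}_n]}$ and ${g_n=\mathbb{E}_{\mu_p}[\mathbf{1}_Y\mid\mathcal{F}_n]}$. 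Each $f_n$ depends on finitely many coordinates and remains increasing, because conditioning on $\mathcal{F}_n$ amounts to integrating the increasing indicator $\mathbf{1}_X$ against the product measure on the remaining coordinates, an operation that preserves monotonicity in the coordinates ${v_1,\dots,v_n}$. Applying the finite-dimensional Harris inequality to $f_n,g_n$ gives ${\mathbb{E}[f_ng_n]\geq\mathbb{E}[f_n]\mathbb{E}[g_n]}$.

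Finally, I would pass to the limit. By the tower property ${\mathbb{E}[f_n]=\mu_p(X)}$ and ${\mathbb{E}[g_n]=\mu_p(Y)}$ for every $n$, while Lévy's martingale convergence theorem gives ${f_n\to\mathbf{1}_X}$ and ${g_n\to\mathbf{1}_Y}$ in $L^2(\mu_p)$; since both sequences are uniformly bounded, their product converges in $L^1$, so ${\mathbb{E}[f_ng_n]\to\mu_p(X\cap Y)}$. The finite-dimensional inequality therefore passes to the limit to yield ${\mu_p(X\cap Y)\geq\mu_p(X)\mu_p(Y)}$. The only delicate points are checking that conditioning genuinely preserves monotonicity and that the limit passage is legitimate, the latter being immediate from the $L^2$ convergence; both Markov and non-vanishing are routine once the product structure and the bounded-degree ball estimate are in hand.
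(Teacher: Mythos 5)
Your proof is correct and takes essentially the same route as the paper: both verify the three properties separately, with $1$-Markovianity and non-vanishing following routinely from the product structure and the uniform bound on ball sizes in a bounded-degree graph. The only difference is that where the paper disposes of positive correlation by citing Harris's inequality (Lemma 3 of Chapter 2 in Bollob\'as--Riordan), you supply the full standard proof of that cited fact --- finite-dimensional FKG by induction on coordinates, then L\'evy martingale convergence of the conditional expectations $f_n,g_n$ (which stay increasing under the product measure) to extend to arbitrary upward-closed Borel sets --- and all the steps of that extension, including the $L^2$-to-$L^1$ limit passage, check out.
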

\begin{proof}
  Any such measure is clearly $1$-Markov. It is non-vanishing on any graph of bounded degree because the cardinal of balls of radius $r$ is uniformly bounded.
  Finally, Bernoulli measures are positively correlated on upward-closed events, a fundamental fact often referred to as Harris inequality (see \cite[Lemma 3 of chapter 2]{bollob_s2006} for a complete proof).
\end{proof}


One of the basic property of \MNVPC{} measures is that imposing a large patch of $0$ at a fixed position has a decreasing probability with the patch size that can be uniformly exponentially upper-bounded.

\begin{lemma}\label{lem:expdecay}
  If $G$ is of bounded degree and $\mu$ a \MNVPC{} measure, then there is some ${\beta}$ with ${0<\beta<1}$ such that, for any ${D\subseteq V}$ with ${|D|=n}$ it holds that ${\mu([0^D])\leq\beta^n}$.
\end{lemma}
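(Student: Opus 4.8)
The plan is to exploit exactly two of the three defining properties of a \MNVPC{} measure, namely the $k$-Markov property and non-vanishing; positive correlation will not be needed here. The key observation is that although $\mu$ is not a product measure, the $k$-Markov property does yield genuine independence between single-site events as soon as the sites are pairwise at graph-distance at least $k$, and bounded degree guarantees that inside any $D$ we can find such a well-separated subset whose cardinality is a fixed fraction of $|D|$.

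First I would record a uniform single-site bound. Applying non-vanishing with radius $r=0$ yields a constant $\alpha>0$ with $\mu([1^{\{v\}}])\ge\alpha$ for every $v\in\sites$, hence $\mu([0^{\{v\}}])\le 1-\alpha$ uniformly (if $\alpha=1$ then $[0^D]$ has measure $0$ and any $\beta\in(0,1)$ works, so I may assume $\alpha<1$). Next, using bounded degree, the balls $B(v,k-1)$ have uniformly bounded cardinality, so I set $M:=\sup_{v\in\sites}|B(v,k-1)|<\infty$.

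Then I would extract a $k$-separated subset by a greedy packing argument: starting from $D$, repeatedly pick a remaining vertex $v$, add it to a set $D'$, and delete from $D$ every vertex within distance $k-1$ of $v$. Each step removes at most $M$ vertices, so the process yields $D'\subseteq D$ with all pairwise distances at least $k$ and $|D'|\ge |D|/M = n/M$. On this set the $k$-Markov property factorises the single-site events: splitting off one vertex of $D'$ at a time (each such singleton being at distance $\ge k$ from the remaining vertices of $D'$) and iterating gives
\[\mu([0^{D'}]) = \prod_{v\in D'}\mu([0^{\{v\}}]) \le (1-\alpha)^{|D'|}.\]

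Finally, since $[0^D]\subseteq[0^{D'}]$, monotonicity of $\mu$ gives $\mu([0^D])\le\mu([0^{D'}])\le(1-\alpha)^{n/M}$, and setting $\beta:=(1-\alpha)^{1/M}\in(0,1)$ completes the argument. I expect the only delicate point to be the greedy extraction of the separated subset and the verification that it has size at least a fixed fraction of $n$; the probabilistic factorisation is then a routine induction on $|D'|$ from the Markov property, while the uniformity of all constants ($\alpha$, $M$, and hence $\beta$) across vertices is precisely what non-vanishing and bounded degree supply.
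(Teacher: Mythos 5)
Your proof is correct and follows essentially the same route as the paper's: a uniform single-site bound from non-vanishing, extraction of a $k$-separated subset of $D$ of size proportional to $n$ via bounded degree, and factorisation over that subset by the $k$-Markov property. Your version merely makes the greedy packing explicit (the paper just asserts one can choose $\lfloor n/\Delta^k\rfloor$ separated vertices) and your bound $|D'|\ge n/M$ even sidesteps the floor that the paper absorbs into its exponent $\tfrac{1}{2\Delta^k}$.
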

\begin{proof}
  Let $k$ be such that $\mu$ is $k$-Markov, and $\alpha>0$ be such that ${\mu([1^{\{v\}}])\geq\alpha}$ for all $v$ ($\alpha$ exists because $\mu$ is non-vanishing).
  If $\Delta$ is a bound on the degree of $G$ then any ball of radius $k$ in $G$ has cardinality at most ${\Delta^k}$.
  Therefore, one can choose at least ${N=\lfloor\frac{n}{\Delta^k}\rfloor}$ vertices ${v_1,\ldots, v_N}$ in $D$ which are pairwise separated by distance at least $k$.
  By the $k$-Markov property we get
  \[\mu([0^D])\leq\prod_{1\leq i\leq N}\mu([0^{\{v_i\}}])\leq (1-\alpha)^N.\]
  Taking ${\beta = (1-\alpha)^{\frac{1}{2\Delta^k}}}$ proves the lemma.
\end{proof}

\section{Rhombus tilings}
\label{sec:rhombus}
We call \emph{tiling}, denoted by $\tiling$, a countable set of tiles that covers the euclidean plane $\mathbb{R}^2$without overlap. That is, $\tiling = \{t_i, i \in \NN\}$ is a tiling when $\bigcup\limits_{i\in\NN} t_i = \RR^2$ and for any $i,j\in\NN$, $\mathring{t_i}\cap \mathring{t_j} = \emptyset$.
In all that follow, we consider the case of \emph{rhombus tilings}, where there are finitely many tiles up to translation, all the tiles are rhombuses and the tiling is edge-to-edge, \emph{i.e.}, any two tiles either intersect on a single common vertex, on a full common edge, or not at all.
Throughout the article we use the famous example of Penrose rhombus tilings \cite{penrose1979}, see Fig.~\ref{fig:tiling}.

\begin{figure}[htp]
  \includegraphics[width=\textwidth]{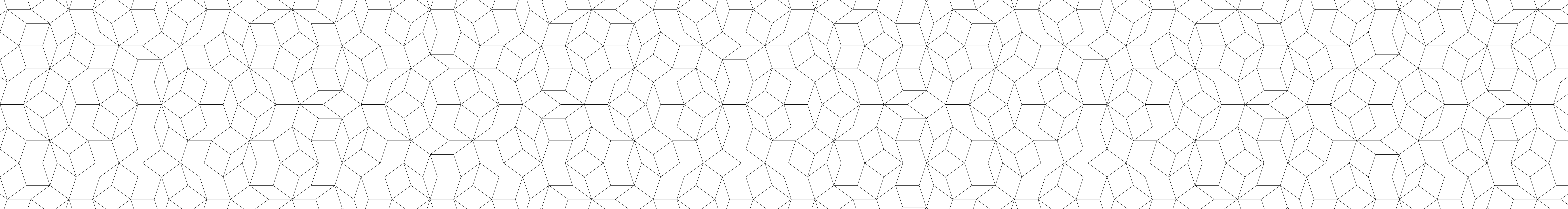}
\caption{A fragment of a Penrose rhombus tiling.} 
\label{fig:tiling}
\end{figure}

In the case of edge-to-edge tilings, the condition that there are finitely many tiles up to translation is equivalent to finite local complexity (FLC) \cite{kenyon1992} that is: for any given size there are finitely many patches of that size.

We say that two tiles $t$ and $t'$ are \emph{adjacent}, denoted by $t\adj t'$, when they share a full edge, that is, when there exists an edge $e$ such that $t \cap t' = e$.
This notion, also called edge-adjacency, is the natural notion of adjacency on edge-to-edge tilings.
However, we also define a weaker notion of vertex-adjacency.
We say that two tiles $t$ and $t'$ are \emph{vertex-adjacent}, denoted by $t\vadj t'$,  when they are distinct and share at least one vertex, \emph{i.e.} $t \cap t' \neq \emptyset$ and $t \neq t'$.

Given a tiling $\tiling$ and a tile $t\in \tiling$, we define the \emph{neighbourhood} of $t$, denoted by $\neighbour{t}$, as the set of tiles which are adjacent to $t$, \emph{i.e.}, $\neighbour{t}:= \{ t' \in \tiling \mid t \adj t'\}$.
We also define the weaker \emph{vertex-neighbourhood} of $t$, $\vneighbour{t}:= \{t' \in \tiling \mid t \vadj t'\}$.
We define similar notions for a set of tiles $S$, that is, $\neighbour{S}:= \{ t' \in \tiling\setminus{S} \mid \exists t \in S, t \adj t'\}$ and $\vneighbour{S}:= \{ t' \in \tiling\setminus{S} \mid \exists t \in S, t \vadj t'\}$. They are respectively the tiles adjacent and vertex-adjacent to the set $S$.
Note that all these notions are not neighbourhoods in the \emph{topological} sense, notably since they do not contain the original set. However, this denomination, which is ultimately about adjacent tiles, is widely used and understood in tilings theory.

We call \emph{patch} $\patch$ a set of adjacent tiles in a tiling. We also sometimes consider \emph{vertex-patches}, which are sets of vertex-adjacent tiles.
We consider only finite patches unless specified otherwise.

The key structure in rhombus tilings are the chains which generalize the rows and columns of the square grid.

\begin{definition}[Chain $\chain$]
  Given a rhombus tiling $\tiling$.

  Given an edge direction $\vec{e}$, we say that two tiles $t$ and $t'$ are $\vec{e}$-adjacent, denoted by $ t \overset{\vec{e}}{\sim} t'$,  when they share an edge of direction $\vec{e}$.

  Given an edge direction $\vec{e}$ and a normal vector $\vec{n}$ (orthogonal to $\vec{e}$), we say that a tile $t'$ is $\vec{e}$-adjacent to $t$ in direction $\vec{n}$ when $t \overset{\vec{e}}{\sim} t'$ and $\langle \vec{tt'} | \vec{n} \rangle > 0$ where $\vec{tt'}$ is the vector from the barycenter of $t$ to the barycenter of $t'$ and $\langle | \rangle$ denotes the scalar product.

  We call \emph{chain} $\chain$ of edge direction $\vec{e}$ a bi-infinite patch $\chain = (t_i)_{i\in \ZZ}$ such that there exists $\vec{n}$ orthogonal to $\vec{e}$ such that for any $i$, $t_{i+1}$ is $\vec{e}$-adjacent to $t_i$ in direction $\vec{n}$.

  We call \emph{half-chain} $\chain^+$ of edge direction $\vec{e}$, orientation $\vec{n}$ and  starting tile $t$ the infinite patch $\chain^+ = (t_i)_{i\in \NN}$ such that $t_0 = t$ and for any $i$, $t_{i+1}$ is $\vec{e}$-adjacent to $t_i$ in direction $\vec{n}$.

  We call \emph{chain segment} a finite connected subset of a chain. For a chain $\chain$ and $i < j$, we write $\chain_i^j$ to denote the subset $\{t_i, \dots, t_j\}$.
\end{definition}

We denote $\chain \equiv \chain'$ when two chains are identical up to shift of index and/or multiplication of one's indices by $-1$.

\begin{figure}[htp]
  \center
  \includegraphics[width=0.8\textwidth]{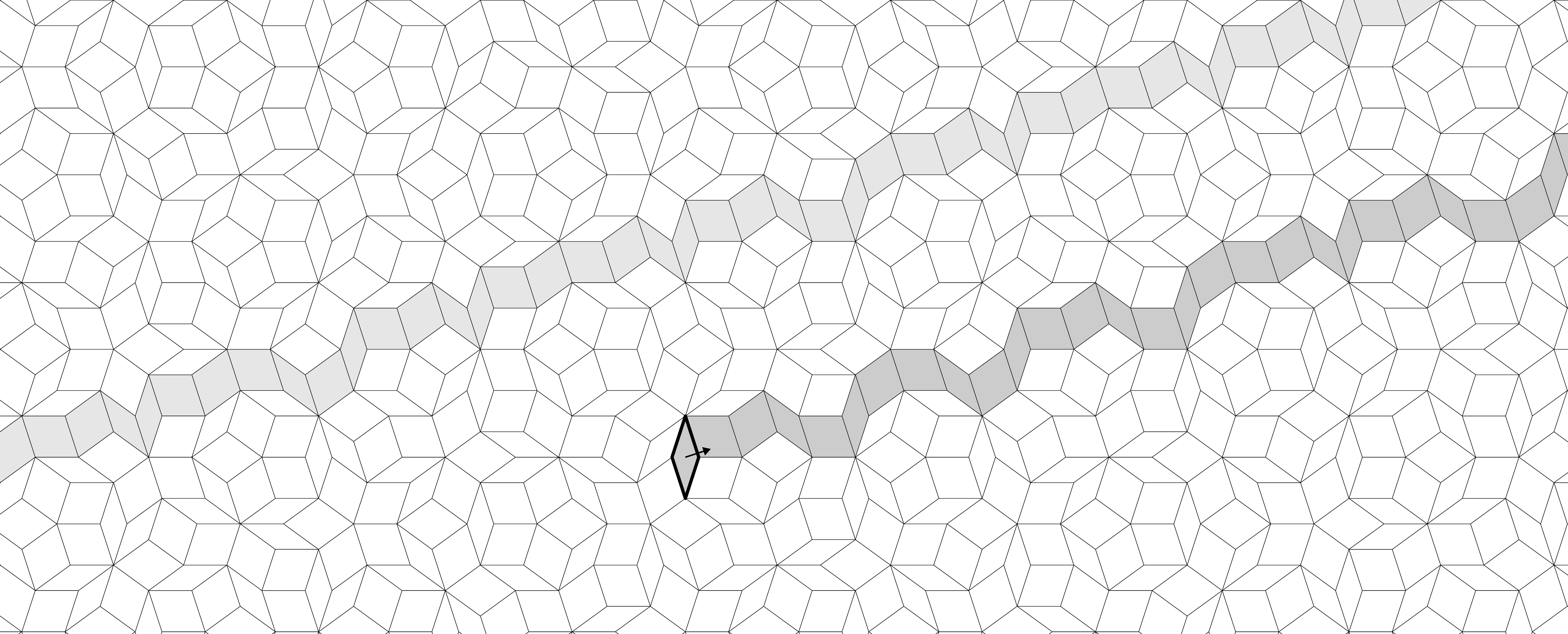}
  \caption{In light grey a chain of rhombuses, in darker gray and with the starting tile in bold a half-chain of rhombuses.}
  \label{fig:chain}
\end{figure}

\begin{lemma}[Chain crossing \cite{kenyon1993}]
  Given a rhombus tiling $\tiling$ and two chains $\chain$ of direction $\vec{e}$ and $\chain'$ of direction $\vec{e'}$.
  The following hold:
  \begin{enumerate}
  \item if $\chain \not\equiv \chain'$ then $\chain$ and $\chain'$ intersect at most once;
  \item if $\chain \not\equiv \chain'$ and $\chain \cap \chain' \neq \emptyset$ then the intersection tile $t$ has edge directions $\vec{e}$ and $\vec{e'}$
  \item if $\vec{e} = \pm \vec{e'}$ then either $\chain \equiv \chain'$ or $\chain \cap \chain' = \emptyset$
  \end{enumerate}
\end{lemma}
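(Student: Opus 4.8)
The plan is to prove the three statements about chain crossings by exploiting the defining geometric property of rhombus tilings, namely that every tile in a fixed chain $\chain$ of edge direction $\vec{e}$ contributes a pair of parallel edges of direction $\vec{e}$, and that consecutive tiles are glued along such an edge. The key structural observation I would establish first is that a chain traces out an infinite ``ribbon'' of constant width (the edges of direction $\vec{e}$ all have the same length, and the ribbon is swept out in the direction $\vec{n}$ orthogonal to $\vec{e}$). Geometrically, the two boundary curves of this ribbon are monotone in the $\vec{n}$ direction, which confines the chain to a strip and controls how any other chain can meet it. I expect the heart of the argument to be a clean combinatorial-geometric description of how a tile sits inside its chain, so I would spend the opening of the proof making that precise.

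For statement (2), which I would prove first as it is the most local, I would argue as follows. Suppose $t$ is a tile lying in both $\chain$ (direction $\vec{e}$) and $\chain'$ (direction $\vec{e'}$). Since $t\in\chain$, the tile $t$ has a pair of edges of direction $\vec{e}$ along which $\chain$ enters and leaves $t$; since $t\in\chair'$, it likewise has a pair of edges of direction $\vec{e'}$. A rhombus has exactly two edge directions, so if $\vec{e}\neq\pm\vec{e'}$ the tile must realize both directions, giving edge directions $\vec{e}$ and $\vec{e'}$ as claimed. This is short once one notes that membership in a chain forces the relevant edge direction to appear among the tile's edges.

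For statement (3), suppose $\vec{e}=\pm\vec{e'}$, so both chains are built by gluing along edges of the same direction $\vec{e}$. I would show that any tile belongs to a unique maximal chain of that edge direction: within a single rhombus, the pair of $\vec{e}$-edges is determined, so the ``entry'' and ``exit'' edges of a traversal in direction $\vec{e}$ are forced, and hence the successor and predecessor tiles are uniquely determined at each step. Propagating this uniqueness in both directions shows that if $\chain$ and $\chain'$ share even one tile then they are identical up to reindexing and orientation reversal, i.e.\ $\chain\equiv\chain'$; otherwise they are disjoint. The reversal of orientation is exactly the freedom absorbed by the $\equiv$ relation as defined above.

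For statement (1), with $\chain\not\equiv\chain'$ and necessarily $\vec{e}\neq\pm\vec{e'}$ by (3), I would use the ribbon/monotonicity picture to rule out a second intersection. The chain $\chain$ lives in a strip orthogonal to $\vec{n}$ and its tiles advance monotonically in $\vec{n}$; the chain $\chain'$ advances monotonically in the transverse direction $\vec{n'}$. Two such monotone ribbons crossing transversally (since $\vec{e}\neq\pm\vec{e'}$ forces $\vec{n}\neq\pm\vec{n'}$) can overlap in at most one tile: a second shared tile would force one of the chains to backtrack in its own advancing direction, contradicting the strict monotonicity $\langle\vec{tt'}\mid\vec{n}\rangle>0$ built into the definition. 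The main obstacle, and where I would be most careful, is making this crossing/monotonicity argument rigorous without hand-waving about the ribbons --- in particular, turning ``cannot backtrack'' into a precise statement, perhaps by projecting tile barycenters onto $\vec{n}$ and $\vec{n'}$ and showing a double intersection would violate planarity or edge-to-edge compatibility. Since this is cited as Kenyon's lemma, I would lean on the standard de Bruijn line/ribbon formalism for rhombus tilings to supply the confinement and transversality facts, and present the monotonicity contradiction as the substantive step.
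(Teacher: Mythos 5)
First, a remark on the comparison itself: the paper does not prove this lemma --- it is imported from Kenyon with a citation --- so your attempt can only be measured against the standard ribbon/de Bruijn argument, which is indeed the route you chose. Your parts (2) and (3) are correct and complete: every tile of a chain of direction $\vec{e}$ shares $\vec{e}$-edges with its chain neighbours, so $\vec{e}$ is one of its two edge directions, which gives (2) (the case $\vec{e}=\pm\vec{e'}$ being vacuous there by (3)); and since the tiling is edge-to-edge, the neighbour across each of the two $\vec{e}$-edges of a tile is unique, so the chain of a given edge direction through a given tile is unique up to reindexing and reversal, which is exactly (3).

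The genuine gap is in part (1). The mechanism you propose --- that a second shared tile ``would force one of the chains to backtrack in its own advancing direction'' --- does not follow from projecting barycenters onto $\vec{n}$ and $\vec{n'}$. If $t$ and $s$ were two shared tiles, then from $t$ to $s$ the $\vec{n}$-coordinate increases along $\chain$ and, after choosing the orientation of $\chain'$, the $\vec{n'}$-coordinate increases along $\chain'$; with, say, $\vec{e}=(1,0)$, $\vec{n}=(0,1)$, $\vec{e'}=(0,1)$, $\vec{n'}=(1,0)$, a putative second intersection up-and-to-the-right of the first violates neither monotonicity, so no backtracking is forced and no contradiction arises at the level of projections. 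The contradiction is topological, and needs two ingredients absent from your sketch: (i) a bi-infinite chain $\chain$, being a simple strip whose steps advance by at least $\theta$ in the $\vec{n}$-direction (Lemma~\ref{lemma:uniform_monotonicity}), separates the plane into exactly two unbounded components, its tiles' non-$\vec{e}$ edges forming the two monotone boundary curves; and (ii) when $\chain'$ traverses a tile $t\in\chain$, it enters and exits through the two $\vec{e'}$-edges of $t$, which lie on opposite boundary curves of the strip, and the transverse displacement across $t$ is always $\epsilon\vec{e}$ with $\epsilon=\mathrm{sign}\langle\vec{e}\mid\vec{n'}\rangle$ fixed (well defined precisely because $\vec{e}\neq\pm\vec{e'}$), so every crossing of $\chain'$ through $\chain$ goes in the \emph{same} transverse direction. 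Given (i) and (ii), a second crossing is impossible: after the first, $\chain'$ sits in one complementary component, and a second same-direction crossing would require it to have returned to the other component without meeting the separating strip. You correctly identified this as the substantive step and pointed at the right formalism, but as written the proof of (1) is incomplete, and the ``monotonicity is violated by backtracking'' claim that carries it is not a valid deduction.
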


Note however that, in an arbitrary rhombus tiling, non-intersecting chains might have different edge directions. 

\begin{lemma}[Uniform monotonicity]
	\label{lemma:uniform_monotonicity}
  Let $\tiling$ be a rhombus tiling with finitely many edge directions.
  There exists a constant $\theta>0$ such that any chain $\chain$ is $\theta$-uniformly monotonous, that is, if $\chain=(\chain_i)_{i\in \ZZ}$ has direction $\vec{e}$, there exists a unit normal vector $\vec{n}$ orthogonal to $\vec{e}$ such that for any $i\in\ZZ$, $\langle \vec{t_{i}t_{i+1}} | \vec{n} \rangle \geq \theta$ where $\vec{t_{i}t_{i+1}}$ is the vector from the barycenter of $t_{i}$ to the barycenter of $t_{i+1}$.
\end{lemma}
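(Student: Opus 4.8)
The plan is to reduce the statement to the analysis of a single \emph{crossing} of the chain, that is, the passage from one tile $t_i$ to its successor $t_{i+1}$, and to bound from below the normal component of the barycentre displacement at such a crossing by a constant depending only on the finite combinatorial data of $\tiling$. I would fix a chain $\chain=(t_i)_{i\in\ZZ}$ of direction $\vec{e}$ and take for $\vec{n}$ the unit vector orthogonal to $\vec{e}$ provided by the definition of a chain, so that $\langle \vec{t_it_{i+1}}\,|\,\vec{n}\rangle>0$ for every $i$; this is the $\vec{n}$ for which I aim to prove the uniform bound. Thus the only thing to upgrade is the strict inequality $>0$ into a uniform $\geq\theta$.

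Next I would compute the displacement explicitly. Two consecutive tiles $t_i,t_{i+1}$ share an edge of direction $\vec{e}$; write $M_i$ for its midpoint, $s_i,s_{i+1}$ for the edge lengths of the two rhombuses, and $\vec{f}_i,\vec{f}_{i+1}$ for their second edge directions (the ones not parallel to $\vec{e}$). Since a rhombus possessing a given edge is spanned by that edge and the transverse one, its centre is the midpoint of the edge shifted by half the transverse side, so $\mathrm{bary}(t_i)=M_i+\frac{s_i}{2}\vec{g}_i$ and $\mathrm{bary}(t_{i+1})=M_i+\frac{s_{i+1}}{2}\vec{g}_{i+1}$, where $\vec{g}_i=\pm\vec{f}_i$ and $\vec{g}_{i+1}=\pm\vec{f}_{i+1}$ are the unit vectors pointing from the shared edge into $t_i$ and into $t_{i+1}$ respectively. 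Subtracting gives $\vec{t_it_{i+1}}=\frac{s_{i+1}}{2}\vec{g}_{i+1}-\frac{s_i}{2}\vec{g}_i$.

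The crux, and the step I expect to require the most care, is to show that the two normal contributions do not cancel but reinforce each other. The tiles $t_i$ and $t_{i+1}$ lie on opposite sides of the line carrying their shared edge (being edge-to-edge, they meet only along it), and because $t_{i+1}$ is $\vec{e}$-adjacent to $t_i$ \emph{in direction} $\vec{n}$, the tile $t_i$ sits strictly on the $-\vec{n}$ side and $t_{i+1}$ strictly on the $+\vec{n}$ side of that line. As the barycentre of a rhombus lies in its interior, this forces $\langle\vec{g}_i\,|\,\vec{n}\rangle<0<\langle\vec{g}_{i+1}\,|\,\vec{n}\rangle$ (and $\langle\vec{g}_i\,|\,\vec{n}\rangle=0$ is excluded, since it would mean $\vec{f}_i\parallel\vec{e}$, contradicting non-degeneracy of the rhombus). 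Consequently
\[\langle \vec{t_it_{i+1}}\,|\,\vec{n}\rangle=\frac{s_{i+1}}{2}\,\bigl|\langle\vec{f}_{i+1}\,|\,\vec{n}\rangle\bigr|+\frac{s_i}{2}\,\bigl|\langle\vec{f}_i\,|\,\vec{n}\rangle\bigr|,\]
a sum of two strictly positive terms, with no cancellation.

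Finally I would make the bound uniform by invoking finiteness. Being a rhombus tiling, $\tiling$ has finitely many tiles up to translation, hence finitely many edge directions and finitely many edge lengths; let $s_{\min}>0$ be the least edge length. For each pair of non-parallel edge directions $(\vec{e},\vec{f})$ the quantity $\bigl|\langle\vec{f}\,|\,\vec{n}_{\vec{e}}\rangle\bigr|$, with $\vec{n}_{\vec{e}}$ a unit normal to $\vec{e}$, is strictly positive and independent of the sign choices; let $\delta>0$ be the minimum of these finitely many positive numbers. Then every crossing of every chain satisfies $\langle\vec{t_it_{i+1}}\,|\,\vec{n}\rangle\geq s_{\min}\,\delta$, so $\theta:=s_{\min}\,\delta$ is a constant depending only on $\tiling$ that works simultaneously for all chains and all indices, which proves the lemma. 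Everything but the non-cancellation observation of the third paragraph is a routine finite case analysis.
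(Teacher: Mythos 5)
Your proof is correct and follows essentially the same route as the paper's: both decompose the barycentre displacement across a shared $\vec{e}$-edge into the two half-transverse-edge contributions (the paper writes this as $\vec{tt'}=\tfrac{1}{2}(\vec{e_i}+\vec{e_j})$ after orienting both transverse directions positively against $\vec{e}^\bot$, which is exactly your non-cancellation observation), and both conclude by minimizing over the finitely many non-parallel direction pairs. Your extra bookkeeping of edge lengths $s_i$ is harmless but unnecessary, since in an edge-to-edge rhombus tiling all edges of adjacent tiles, hence of all tiles by connectivity, have equal length.
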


\begin{proof}
  Let $\tiling$ be a rhombus tiling with $d$ edge directions $\{ \vec{e_0}, \dots \vec{e_{d-1}}\}$.
  Denote $\vec{e}^\bot$ the unit vector orthogonal to $\vec{e}$ in the positive orientation.

  Let $\theta := \min\limits_{0\leq i,j < d, i\neq j} |\langle \vec{e_i} | \vec{e_j}^\bot \rangle|$.

  Any chain $\chain \subset \tiling$ is $\theta$-uniformly monotonous.
  Indeed, let $\chain$ be a chain of edge direction $\vec{e}$.
  Let $t$ and $t'$ be two consecutive tiles in $\chain$.
  There exists two directions $\vec{e_i}$ and $\vec{e_j}$ (possibly the same) such that $t$ has edge directions $\vec{e}$ and $\vec{e_i}$ and $t'$ has edge directions $\vec{e_j}$ and $\vec{e}$.
  Without loss of generality (changing the choice of orientations of the directions), assume $\langle\vec{e_i}|\vec{e}^\bot\rangle >0$ and $\langle \vec{e_j} | \vec{e}^\bot \rangle > 0$.
  Since $t$ and $t'$ are parallelograms and adjacent along an $\vec{e}$ edge, this means that $\vec{tt'} = \tfrac{1}{2}(\vec{e_i} + \vec{e_j})$.
  So $\langle \vec{tt'} | \vec{e}^\bot \rangle = \tfrac{1}{2}(\langle\vec{e_i}|\vec{e}^\bot\rangle  + \langle\vec{e_j}|\vec{e}^\bot\rangle )\geq \theta$.   

\end{proof}
\begin{figure}[htp]
  \center
  \includegraphics[width=0.8\textwidth]{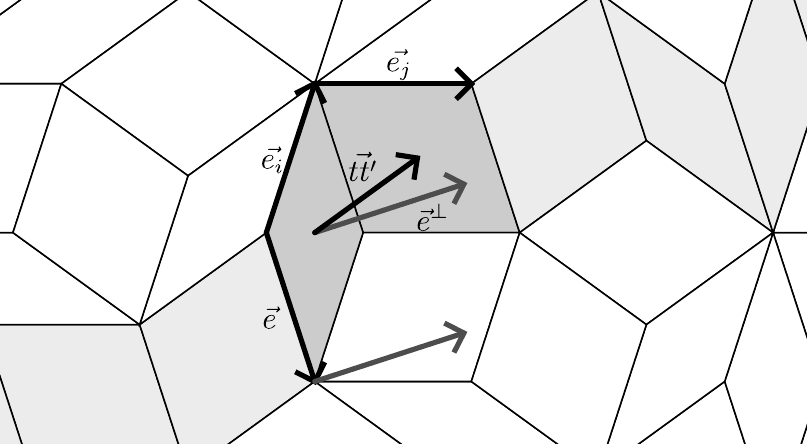}
  \caption{Adjacent tiles along a chain and vector.}
  \label{fig:adj_tiles}
\end{figure}

\begin{remark}[Rhombus and parallelograms]
  The keen reader might have already noticed that the combinatorial structure of chains only uses the fact that rhombuses have two pairs of opposite parallel edges, that is, they are parallelograms.
  However for simplicity of redaction we chose to only mention rhombus tilings as any edge-to-edge parallelogram tiling is combinatorially equivalent to an edge-to-edge rhombus tiling, meaning they have the same adjacencies.
  Indeed the transformation consisting in rescaling all edge directions so they have the same length transforms an edge-to-edge parallelogram tiling to a combinatorially equivalent edge-to-edge rhombus tiling.
  Note however that, by transforming a tileset in this manner we might allow more tilings as collinear edge directions merge in this process.
\end{remark}

\section{Critical probability for 2-bootstrap percolation on rhombus tilings}
\label{sec:mainresult}

The 1-bootstrap dynamics is not interesting on connected graphs ($\invasion$ contains all configurations except one). Moreover, on rhombus tilings, the 3-bootstrap dynamics always admits finite obstacles and therefore has a trivial critical probability (${p_c^0=p_c^1=1}$): indeed, considering any vertex of the tiling and assuming that all tiles sharing this vertex are in state $0$ (not infected), then they will stay in state $0$ forever, because each of them has at most $2$ neighbours in state $1$ at any moment.

In this section we investigate 2-neighbour contamination on rhombus tilings, and we prove almost sure invasion on any rhombus tiling with finitely many edge directions for any \MNVPC{} measure $\mu$.
In particular, it implies almost sure invasion for any non-trivial Bernoulli measure so that the critical percolation threshold of 2-neighbour contamination on any rhombus tiling with finitely many edge directions  is $p_c^0=p_c^1=0$.

Formally, to fit the formalism of Section \ref{sec:bootstrap-intro}, we can consider the adjacency graph of the rhombus tilings.
The adjacency graph, or dual graph, $G_\tiling$ of the rhombus tiling $\tiling$ is $G_\tiling = (\tiling, E_\tiling)$ where $(t,t')\in E_\tiling \Leftrightarrow t\adj t'$.
However, for simplicity, we identify $\tiling$ and $G_\tiling$ in this section.

In Section \ref{sec:geometrical_elements}, we study the geometry and combinatorics of \emph{clusters} (stable contaminated patterns) and prove that any finite cluster is enclosed in a ``wall'' which is a polygon of a bounded number of chain segments, generalising the fact that in $\mathbb{Z}^2$ clusters are enclosed in a rectangle that is in particular a polygon of 4 chain segments.
We also prove that infinite clusters are bounded by some chain path including an infinite half-chain.

In Section \ref{sec:probargs} we use the geometrical and combinatorial results on clusters to prove that invasion is almost sure for any \MNVPC{} measure.
The key idea is counting the possible finite ``walls'' of length $n$ around a tile and bounding their number by a polynomial in $n$.
From this we deduce for any \MNVPC{} measure $\mu$ :
\begin{itemize}
  \item a uniform bound $\lambda<1$ on the measure of the event $E_t$ of a tile being enclosed in a finite ``wall'', and
  \item a uniform approximation of $E_t$ by the tail event $E_{t,\geq n}$ of the tile $t$ being enclosed in a finite ``wall'' of length more than $n$.
\end{itemize}
Remarking that the head events $E_{t,\leq n}$ of being enclosed in a wall of length less than $n$ are independent for sufficiently far away tiles for Markov measures, we obtain almost independence of far away events $E_t$ from which we conclude almost sure invasion for any \MNVPC{} measure.

\subsection{Geometrical and combinatorial elements}
\label{sec:geometrical_elements}
The first observation that we can make is that a chain of 0s can stop a contaminated cluster, see Fig~\ref{fig:chain-wall}. Actually we prove that chains of $0$s are the only possible ``walls'' for $2$-neighbour contamination.
To formalize this we introduce the notion of chain-convexity.

\begin{figure}[htp]
  \center
  \includegraphics[width=0.48\textwidth]{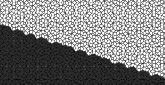}
  \hfill
  \includegraphics[width=0.48\textwidth]{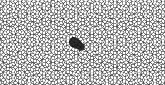}
  \caption{On the left a stable half plane of $1$s (represented in dark gray), on the right a simple stable finite cluster of $1$s.}
  \label{fig:chain-wall}
\end{figure}

\begin{definition}[Chain convex]
  A connected set of tiles $S\subset \tiling$ is called \emph{chain-convex} when for any chain of tiles $\chain=(\chain_i)_{i\in \ZZ}$ the following hold:
  \begin{enumerate}
  \item $\{j \in \ZZ \mid \chain_j \in S\}$ is an interval of $\ZZ$;
  \item if there exist $i< j$ such that $\chain_i$ and $\chain_j$ are vertex-neighbours of $S$ (that is $\{\chain_i, \chain_j\} \subset \vneighbour{S}$) then either $\chain_{i+1}^{j-1} \subset \neighbour{S}$ and no tile of $\chain$ is in $S$, or $\neighbour{S}\cap \chain = \{ \chain_i, \chain_j\}$ and the tiles of $\chain$ that belong to $S$ are exactly $\chain_{i+1}^{j-1}$.
  \end{enumerate}
\end{definition}

Note that a chain convex set of tiles, as it is connected and contains no hole due to item 1 of the definition, is simply connected.

\begin{remark}
	If a tile $\chain_i$ from a chain $\chain$ is vertex-adjacent to a set $S$, then among $\chain_{i+1}$ and $\chain_{i-1}$, one of them is either vertex-adjacent to $S$ too, or in $S$. Indeed, $\chain_{i+1}$ shares half the vertices of $\chain_i$ with it, and $\chain_{i-1}$ the other half; consequently one of them also shares a vertex with a tile of $S$, unless it is itself a tile of $S$.
\end{remark}

\begin{remark}[Intersections of chains with the boundary of chain-convex sets]
  This definition states that if $\chain$ is a chain and $S$ is chain-convex, we have the following possibilities for the adjacency of $\chain$ and $S$:
  \begin{itemize}
  \item $\chain$ has no element that is vertex-adjacent to $S$ (this also holds if $\chain \subset S$);
  \item $\chain$ has exactly one element $\chi_{i}$ that is vertex-adjacent to $S$; then by the above remark either $\{ \chi_{i+1}, \chi_{i+2}, \dots \}$ or $\{ \chi_{i-1}, \chi_{i-2}, \dots \}$ is in $S$;
  \item $\chain$ has exactly two elements $\chain_i$ and $\chain_j$ that are vertex-adjacent to $S$ (and in particular no other element can be adjacent to $S$); then $\chain_i$ and $\chain_j$ are adjacent to $S$ and the intersection with $S$ is exactly the (possibly empty) set $\{ \chain_{i+1}, \dots, \chain_{j-1} \}$;
  \item $\chain$ has at least three elements that are vertex-adjacent to $S$ (and notably not in $S$), and a finite number of them; then let us take $\chain_i$ and $\chain_j$ be the ones with smallest and largest indices. We cannot have the second case of point 2 of the definition above, therefore $\chain$ does not intersect $S$, all elements in $\{ \chain_{i+1}, \dots, \chain_{j-1} \}$ neighbour $S$, and no element of $\chain$ outside of $\chain_i^j$ is adjacent or vertex-adjacent to $S$.
  \item $\chain$ has infinitely many elements that are vertex-adjacent to $S$. Then similarly $\chain$ does not intersect $S$, and either has a half-chain adjacent to $S$, or is adjacent to $S$ in its entirety.
  \end{itemize}
\end{remark}

With the following two results, we prove that any vertex-connected set of tiles that is stable for $2$-neighbour contamination is chain convex, as we would want it to be. We start by a more technical lemma on vertex-connected stable sets of tiles.

\begin{lemma}[Technical result on chains and stable clusters]\label{lemma:technical_stablechain}
  Let $S$ be a  vertex-connected set of tiles $S\subset \tiling$ that is stable for $2$-neighbour contamination.
  Let $\chain \in \tiling$ be a chain of rhombuses of edge direction $\edge$ such that there exist $i<j \in \ZZ$ with $\chain_i, \chain_j \in \vneighbour{S}$.
  Denote $v_i$ (resp. $v_j$) a vertex common to $\chain_i$ and $S$ (resp. common to $\chain_j$ and $S$).

  If no tile of $\chain_{i+1}^{j-1}$ is in $S$, and if $v_i$ and $v_j$ are in the same half-space of $\tiling \setminus \chain$,
  (that is, $v_i$ and $v_j$ are connected in $S\cap (\tiling \setminus \chain)$, as in Fig.~\ref{fig:technical_stablechain}),
  then the edge path $\epath_{\chain}$ from $v_i$ to $v_j$ along the boundary of $\chain$ is also in the boundary of $S$.
\end{lemma}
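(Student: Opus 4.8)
The plan is to reduce the claim to a statement purely about the tiles lying on one side of the chain, and then to extract the conclusion from the stability of $S$ by an extremality argument.

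First I would set up the geometry. Since $\chain$ is a bi-infinite $\theta$-uniformly monotonous strip (Lemma~\ref{lemma:uniform_monotonicity}), it separates the plane into two half-planes; let $H$ be the one containing $v_i$ and $v_j$. The boundary of $\chain$ on the $H$-side is a bi-infinite edge-path made of the non-$\edge$ edges of the chain tiles, and $\epath_{\chain}$ is the finite sub-path of it joining $v_i$ to $v_j$. Every edge of $\epath_{\chain}$ is an edge of some chain tile $\chain_k$ with $i\le k\le j$, and on the $H$-side of that edge sits a unique tile $u$. Because $\chain_i,\chain_j\in\vneighbour{S}$ (hence are not in $S$) and, by hypothesis, no tile of $\chain_{i+1}^{j-1}$ is in $S$, every chain tile met by $\epath_{\chain}$ is outside $S$. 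Consequently an edge of $\epath_{\chain}$ lies on $\partial S$ exactly when its $H$-side tile $u$ belongs to $S$, and the whole lemma reduces to the claim that \emph{every $H$-side tile along $\epath_{\chain}$ is in $S$}.

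The core step is to prove this claim by contradiction using stability. Using the hypothesis that $v_i$ and $v_j$ are connected in $S\cap(\tiling\setminus\chain)$, I fix a vertex-connected path $\Pi$ of $S$-tiles contained in $H$ joining a tile through $v_i$ to a tile through $v_j$. Concatenating $\epath_{\chain}$ with a simple arc running through $\Pi$ yields a closed curve enclosing a bounded region; let $N$ be the set of tiles in this region that are not in $S$. If the claim failed, some $H$-side tile along $\epath_{\chain}$ would lie outside $S$, forcing $N\neq\emptyset$. By construction every neighbour of a tile of $N$ is either a tile of $S$ (a tile of $\Pi$ or an enclosed $S$-tile), a tile of $N$, or one of the bounding chain tiles $\chain_i^j$, which all lie at the $-\vec{n}$ extreme of the region. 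I then pick $u^*\in N$ extremal for the lexicographic order $(\langle b(u)\mid\vec{n}\rangle,\langle b(u)\mid\edge\rangle)$, where $b(u)$ is the barycenter of $u$ and $\vec{n}$ is the monotonicity normal of $\chain$. The two neighbours of $u^*$ across the edges whose outward normals strictly increase this lexicographic value are then more extremal than $u^*$, hence not in $N$; being strictly farther from the chain than the bounding chain segment, they are not chain tiles either, so they must lie in $S$. Thus $u^*\notin S$ has at least two neighbours in $S$, contradicting the stability of $S$, and therefore $N=\emptyset$.

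The main obstacle is this last geometric step: checking that the extremal enclosed tile $u^*$ genuinely has two \emph{distinct edge-neighbours} in $S$. This requires the local rhombus structure, namely that among the four edges of a rhombus the outward normals spread out enough that two of them strictly increase the chosen lexicographic value, together with the fact that the corresponding neighbours, sitting above the chain segment, cannot be among the boundary chain tiles $\chain_i^j$. Care is also needed at the endpoints $v_i,v_j$, which may sit at either vertex of the extremal chain tiles and where $\Pi$ may touch the chain, so that the enclosing curve and the set $N$ are set up cleanly; here the remark following the definition of chain-convexity — that a chain tile vertex-adjacent to $S$ forces a neighbouring tile into, or adjacent to, $S$ — is what keeps the $H$-side tiles tracking $S$ near the ends. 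The remaining points (separation of the plane by the chain, uniqueness of the $H$-side tile across each boundary edge, finiteness of $N$) are routine once this setup is fixed.
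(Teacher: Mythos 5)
Your setup (reduce to showing every $H$-side tile along $\epath_{\chain}$ lies in $S$, enclose a bounded region between $\epath_{\chain}$ and an arc through $S$, and derive a contradiction from the non-$S$ tiles $N$ inside it) matches the paper's proof up to that point. But the pivotal extremality step fails, and it fails exactly at the place you flagged as the ``main obstacle''. The barycenter displacement between two rhombi adjacent across an edge $E$ equals $\tfrac12(\vec c+\vec c\,')$, where $\vec c$ and $\vec c\,'$ are the two tiles' \emph{other} edge directions, oriented positively against the outward normal of $E$ --- and that is the \emph{only} control you have: positivity against the normal of $E$ itself, not against any globally fixed direction. Crossing an edge whose outward normal is lexicographically positive for $(\langle\cdot\mid\vec n\rangle,\langle\cdot\mid\edge\rangle)$ can therefore strictly \emph{decrease} the lexicographic value of the barycenter. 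Concretely, with $\edge=(1,0)$, $\vec n=(0,1)$, take $u^*$ with edge directions $\vec a$ at angle $85^\circ$ and $\vec b$ at angle $5^\circ$, let its neighbour across the $\vec a$-edge with normal at $-5^\circ$ have other direction $(0,-1)$, and its neighbour across the opposite $\vec a$-edge have other direction $\vec b$. The displacements to these two neighbours are $\tfrac12\bigl((\cos 5^\circ,\sin 5^\circ)+(0,-1)\bigr)$ and $-(\cos 5^\circ,\sin 5^\circ)$, both with strictly negative $\vec n$-component; together with the lower $\vec b$-neighbour, \emph{three} of the four neighbours of $u^*$ are lexicographically smaller. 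So your extremal tile may have only one neighbour forced out of $N$, and a tile outside $S$ with a single edge-neighbour in $S$ is perfectly compatible with stability: no contradiction. A second, independent problem is your claim that the bounding chain tiles ``lie at the $-\vec n$ extreme of the region'': by the definition of chains and Lemma~\ref{lemma:uniform_monotonicity}, a chain of edge direction $\edge$ \emph{progresses} in the direction of its normal ($\langle\vec{t_it_{i+1}}\mid\vec n\rangle\geq\theta$), so $\chain_i^j$ flanks the region along its entire $\vec n$-extent rather than sitting below it; hence a lexicographically larger neighbour of $u^*$ can perfectly well be a chain tile, breaking the inference ``not in $N$ and not a chain tile, hence in $S$''.

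These failures are not incidental: in a general rhombus tiling the only available ``monotone lines'' are the chains themselves, each monotone only with respect to its own normal, which is why a sweep by a fixed global direction cannot work. The paper's proof substitutes exactly the right tool for your extremal tile: it picks $t\in D\cap\neighbour{S}$, notes by stability that $S$ meets only one edge of $t$, launches the chain $\chain'$ through $t$ in a free edge direction $\vec{e_t}$, and uses uniform monotonicity plus the chain-crossing lemma (a chain must cross the closed curve $\epath_\chain\cdot\epath_S$ twice, but can cross $\chain$ at most once, so it must exit through $\epath_S$ into $S$) to carve out a strictly smaller non-empty domain $D'\subsetneq D$ satisfying the same hypotheses; infinite descent on finite domains then gives the contradiction. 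If you want to salvage your argument, replacing the lexicographic sweep by this chain-launching step essentially reconstructs the paper's proof.
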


Note that in the following proofs, $\tiling \setminus \chain$ and $S \cap (\tiling \setminus \chain)$ are understood as sets of tiles.

\begin{proof}
  Let $S$ be a vertex-connected set of tiles $S\subset \tiling$ that is stable for $2$-neighbour contamination.
  Let $i<j \in \ZZ$ and $\chain=(\chain_i)_{i\in \ZZ}$ a chain of rhombuses of edge direction $\edge$ satisfying the condition above.
  Notably, $\chain_{i+1}^{j-1}$ has no tile in $S$; and there are vertices $v_i$ and $v_j$ vertices of $\chain_i\cap S$ and $\chain_j \cap S$ which are in the same half-space of $\tiling \setminus \chain$, and connected in $S\cap (\tiling \setminus \chain)$ as depicted in Fig.~\ref{fig:technical_stablechain}.
  
  We prove that all the rhombuses $\chain_k\in\chain_{i+1}^{j-1}$ are adjacent to $S$ through a non-$\edge$ edge.
  More precisely, the edge path $\epath_\chain$ from $v_i$ to $v_j$ along the boundary of $\chain$ (it exists as $v_i$ and $v_j$ are in the same half-space) is also in the boundary of $S$.

  \begin{figure}[htp]
    \includegraphics[width=\textwidth]{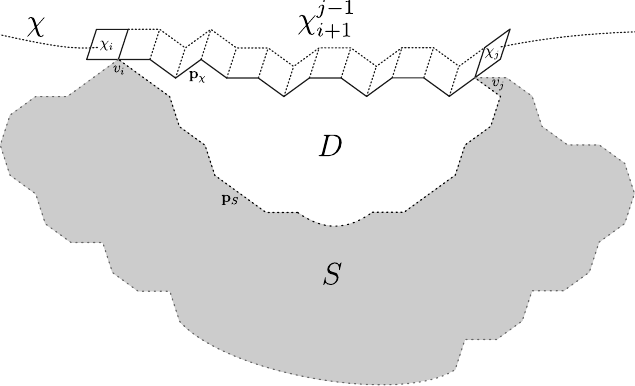}
    \caption{A chain $\chain$ touching twice a stable set $S$. We prove that the domain $D$ must be empty.}
    \label{fig:technical_stablechain}
  \end{figure}
  
  Denote $\epath_S$ a path from $v_j$ to $v_i$ along the boundary of $S$ such that $S$ does not intersect the domain delimited by $\epath_\chain\cdot \epath_S$, where $\cdot$ is the path concatenation. To obtain such a path, one can first take an arbitrary path $p_0$ along the boundary of $S$ connecting $v_j$ to $v_i$ ($v_j$ and $v_i$ lie on the boundary of $S$ by hypothesis, since $\chain_i, \chain_j \in \vneighbour{S}$). If the (finite) domain $D_0$ delimited by $\epath_\chain\cdot p_0$ doesn't intersect $S$ we are done, otherwise there must be a tile $t\in S\cap D_0$ with a vertex on $p_0$. Considering the edge-connected component of $S\cap D_0$ containing $t$, we can define a new path $p_1$ along the boundary of $S$ still connecting $v_j$ to $v_i$ but such that the domain $D_1$ delimited by $\epath_\chain\cdot p_1$ is included in $D_0$ but does not contain $t$.
  
  Iterating this process a finite number of times (since $S \cap D_0$ is finite), we obtain the desired path $\epath_S$.

  First, remark that if $\epath_S$ is trivial, that is if $v_i=v_j$, then by uniform monotonicity of $\chain$ in direction $\edge^\bot$ (Lemma \ref{lemma:uniform_monotonicity}), $j=i+1$ and so the conclusion holds vacuously with $\chain_{i+1}^{j-1} = \emptyset$.

  Now we assume that $\epath_S$ is non trivial, as depicted in Fig.~\ref{lemma:technical_stablechain}.
  Assume for contradiction that the finite patch of tiles $D$ defined as the interior of the cycle $\epath= \epath_\chain\cdot \epath_S$ is non empty.
  We now construct a subset $D'\subsetneq D$ that satisfies the same hypothesis and has to be non-empty. We consequently reach a contradiction by iterating this construction which builds a strictly decreasing sequence of non-empty finite sets of tiles.

  As $D$ is non-empty and delimited by $\epath_S$, there exists a tile $t\in D\cap \neighbour{S}$. As $S$ is stable for $2$-neighbour contamination, $S$ touches exactly one edge of $t$. So $t$ has another edge direction $\vec{e_t}$ so both $\vec{e_t}$-edges of $t$ are not in $\epath_S$. Denote $\chain'$ the chain of rhombuses of edge direction $e_t$ passing through $\chain'_0:=t$.
  As $\epath$ is a closed curve containing $t$, and $\chain'$ is uniformly monotonous in direction $e_t^\bot$, $\chain'$ crosses $\epath$ at least twice. As chains can cross at most once, the $\epath_{\chain}$ part of $\epath$ cannot be crossed twice by $\chain'$, and so $\chain'$ crosses $\epath_S$ at least once.
  Therefore there exists $t':=\chain'_k$ such that ${\chain'}_0^{k}\subset D$ and $\chain'_{k+1}\in S$.
  In particular, $\chain'_k$ is adjacent to $S$ through an $e_t$ edge.
  Denote $D'$ the subset of $D$ delimited by the subpath $\epath'_S$ between $t$ and $t'$ along the boundary of $S$ and the path $\epath_{\chain'}$ between $t$ and $t'$ on the boundary of $\chain'$, as depicted in Fig.~\ref{fig:technical_stablechain_decomposition}.
  $D'\subsetneq D$ and $D'\neq \emptyset$ as otherwise $t'$ would be adjacent to $S$ through both a $e_t$ edge and a non-$e_t$ edge, contradicting the stability of $S$.
  Additionnally, the chain $\chain'$ from $t$ to $t'$ satisfies the same hypothesis as the chain $\chain$ from $\chain_i$ to $\chain_j$. 
  So we can repeat this decomposition process to reach a contradiction.

  \begin{figure}[htp]
    \includegraphics[width=\textwidth]{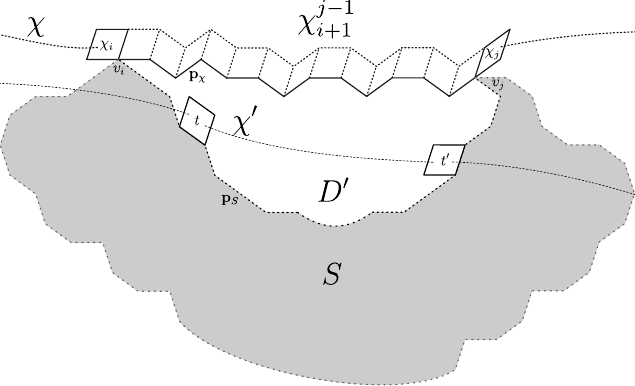}
    \caption{Decomposing the domain $D$.}
    \label{fig:technical_stablechain_decomposition}
  \end{figure}

  Therefore $\epath_S=\epath_\chain$ and all the tiles along the chain $\chain$ from $v_i$ to $v_j$ are adjacent to $S$ through a non-$\edge$ edge.
\end{proof}

\begin{remark}
A direct consequence of that lemma is that if a chain $\chain$ (of direction $\edge$) touches a connected set $S$ of tiles that is stable for $2$-neigbour contamination through two vertices $v_i \in \chain_i$ and $v_j \in \chain_j$ lying in two different half-spaces of $\tiling \setminus \chain$, then the whole portion of chain $\chain_{i+1}^{j-1}$ is in $S$.

This follows from the fact that $v_i$ and $v_j$ are connected in $S$ by some edge path $\epath$, which crosses $\chain$ by Jordan's theorem.

Denote $v_{i_1},\dots, v_{i_k}$ the vertices of intersection of $\epath$ with $\chain$, including $v_i$ and $v_j$, named from the smallest index $i_1$ to the biggest index $i_k$ of tiles in the chain $\chain$ that contain these vertices (up to using $\prime$ if a given tile has several vertices involved in the process). Applying Lemma~\ref{lemma:technical_stablechain} on the elementary intervals between consecutive intersection points, we obtain that all tiles of $\chain_{i_1+1}^{i_k-1}$ are adjacent to $S$ through a non-$\edge$ edge.

Additionally, if $\epath$ crosses $\chain$, then $S$ also contains some tile $\chain_{k'} \in \chain$, with $i_1 \leq k' \leq i_k$. Then $\chain_{k'+1}$ (or $\chain_{k'-1}$ if $k' = i_k$) is adjacent to $S$ both through $\chain_{k'}$ and a non-$\edge$ edge. Hence $\chain_{k'+1} \in S$ (or $\chain_{k'-1}$) by stability of $S$ for 2-neighbour contamination. With a finite number of steps, we deduce that the entire $\chain_{i_1+1}^{i_k-1}$ chain portion is in $S$. Notably, $\chain_{i+1}^{j-1}$ is in $S$.

\label{remark:technical_corollary}
\end{remark}

\begin{lemma}[Bootstrap stability]\label{lemma:stablechainconvex}
  Any vertex-connected set of tiles $S\subset \tiling$ that is stable for $2$-neighbour contamination is chain convex.
\end{lemma}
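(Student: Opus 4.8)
The plan is to fix an arbitrary chain $\chain=(\chain_k)_{k\in\ZZ}$ of edge direction $\edge$ and verify the two defining conditions of chain-convexity for it, reading Lemma~\ref{lemma:technical_stablechain} and Remark~\ref{remark:technical_corollary} as the two halves of a single dichotomy, according to whether $S$ touches $\chain$ from one side only or from both sides of $\chain$. The standing observation that I would use throughout is that, since $S$ is stable for $2$-neighbour contamination, \emph{every} tile $t\notin S$ has at most one edge-neighbour in $S$; this is exactly what makes the boundary of $S$ behave like a chain of $0$s, and it turns "two infected neighbours'' into an immediate contradiction.

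For condition~2, I would take $i<j$ with $\chain_i,\chain_j\in\vneighbour{S}$ and fix vertices $v_i$ (common to $\chain_i$ and $S$) and $v_j$ (common to $\chain_j$ and $S$), as in the statement of Lemma~\ref{lemma:technical_stablechain}. Two cases arise. If $v_i$ and $v_j$ lie in the same half-space of $\tiling\setminus\chain$ and no tile of $\chain_{i+1}^{j-1}$ lies in $S$, then Lemma~\ref{lemma:technical_stablechain} shows that every tile of $\chain_{i+1}^{j-1}$ is adjacent to $S$ through a non-$\edge$ edge, hence $\chain_{i+1}^{j-1}\subseteq\neighbour{S}$, which is the first alternative of the definition. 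If instead $v_i$ and $v_j$ lie in different half-spaces, then Remark~\ref{remark:technical_corollary} gives $\chain_{i+1}^{j-1}\subseteq S$, which is the second alternative. I would then argue these cases are exhaustive: if some tile of $\chain_{i+1}^{j-1}$ belongs to $S$ we cannot be in the same-half-space/no-interior-tile regime, so the different-half-space case must hold; conversely if none does, the same-half-space regime holds. To upgrade the local conclusions to the global clauses of the definition (namely $\neighbour{S}\cap\chain=\{\chain_i,\chain_j\}$, respectively \emph{no} tile of the whole bi-infinite chain in $S$), I would invoke uniform monotonicity (Lemma~\ref{lemma:uniform_monotonicity}) together with the chain crossing lemma to see that $\chain$ can enter and leave $S$ only along one contiguous block, so that outside $\chain_i^j$ the chain neither meets nor neighbours $S$.

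For condition~1, I would show $\{k\mid\chain_k\in S\}$ is an interval by ruling out gaps. A gap of length one is impossible: the unique hole tile would be edge-adjacent to the two flanking tiles $\chain_i,\chain_j\in S$, giving it two infected neighbours and contaminating it, contradicting stability. For a longer gap flanked by $\chain_i,\chain_j\in S$, I would use that $S$ is vertex-connected to join $\chain_i$ to $\chain_j$ by a path of tiles of $S$ avoiding the hole. Since $\chain_i$ and $\chain_j$ sit on $\chain$, each of them carries vertices on both boundary paths of the chain-strip (one in each half-space of $\tiling\setminus\chain$); choosing the contact vertices on opposite boundaries, the $S$-path realizes a touching of $\chain$ by $S$ on both sides, so Remark~\ref{remark:technical_corollary} forces the entire hole $\chain_{i+1}^{j-1}$ into $S$, contradicting the assumption that it was a gap.

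The main obstacle I anticipate is the half-space bookkeeping, and in particular the configuration where $S$ itself contains tiles of $\chain$: then the flanking tiles lie \emph{on} the spine rather than strictly inside a half-space, so the assignment of $v_i,v_j$ to half-spaces (and the verification that the $S$-path genuinely crosses $\chain$ in the sense required by Remark~\ref{remark:technical_corollary}) must be done carefully, and the degenerate sub-cases ($v_i=v_j$, gaps of length one or two, or $\chain\subseteq S$) handled separately. The globalization step from the finite window $\chain_i^j$ to the full bi-infinite chain is routine but relies essentially on uniform monotonicity to prevent the chain from returning to $S$ far away, and I would state it explicitly rather than leave it implicit.
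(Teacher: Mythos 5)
You take the same route as the paper (Lemma~\ref{lemma:technical_stablechain} and Remark~\ref{remark:technical_corollary} as the engine, with stability supplying ``two infected edge-neighbours'' contradictions), but two steps have genuine gaps. For condition~2, your case split is not exhaustive: the negation of ``$v_i,v_j$ in the same half-space \emph{and} no tile of $\chain_{i+1}^{j-1}$ in $S$'' is not ``different half-spaces''. Concretely, $S$ may touch $\chain$ from one side at $\chain_i$ and $\chain_j$ while the chain dips into $S$ along some sub-interval $\chain_a^b$ strictly inside $(i,j)$; then $v_i,v_j$ lie on the same side and interior tiles belong to $S$, so neither of your two cases applies, and this is precisely the configuration condition~2 must \emph{exclude} (it demands the intersection be exactly $\chain_{i+1}^{j-1}$ or empty). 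The paper handles it by decomposing $\chain_i^j$ into sub-intervals whose endpoints are not in $S$ and then, in the ``interior entirely in $S$'' case, showing by stability that no chain tile other than $\chain_i,\chain_j$ can even be a vertex-neighbour of $S$: a third contact $\chain_k$ would, via Lemma~\ref{lemma:technical_stablechain}, put an edge path on the boundary of $S$ giving $\chain_j$ (or $\chain_{k-1}$) two edge-neighbours in $S$. Your substitute for this globalization --- uniform monotonicity plus the chain-crossing lemma to conclude that outside $\chain_i^j$ the chain ``neither meets nor neighbours $S$'' --- does not work: those lemmas constrain chains, not $S$, and at this stage $S$ is an arbitrary vertex-connected stable set which could a priori wrap around and become vertex-adjacent to $\chain$ again far away without any chain crossing being violated; only the stability contradiction rules this out. (Note also that the operative hypothesis of Lemma~\ref{lemma:technical_stablechain} is that $v_i$ and $v_j$ are connected in $S\cap(\tiling\setminus\chain)$, not bare half-space membership, so the subcase ``same side but not connected off the chain'' must be routed to the Remark rather than to the lemma; your split by half-spaces alone misses that too.)

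For condition~1, the length-one-gap observation is fine, but the ``opposite boundaries'' trick for longer gaps misapplies Remark~\ref{remark:technical_corollary}. If $S$ meets the gap region from one side only (apart from the on-chain tiles $\chain_i,\chain_j\in S$ themselves), then the vertices of $\chain_i$ on the opposite boundary path belong to $S$ only through chain tiles, and the connecting path in $S$ can stay weakly on one side of the strip; no crossing in the sense required by the Remark's Jordan argument occurs, so ``choosing the contact vertices on opposite boundaries'' manufactures hypotheses that the Remark's proof cannot actually exploit. The paper instead splits on whether the flanking tiles are connected in $S\cap(\tiling\setminus\chain)$: if not, the Remark applies and the gap is swallowed by $S$; if so, Lemma~\ref{lemma:technical_stablechain} places the boundary edge path of $\chain$ on the boundary of $S$, whence the first gap tile $\chain_{i+1}$ has two edge-neighbours in $S$ (namely $\chain_i$ and the tile across that path), contradicting stability. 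You never invoke the technical lemma in condition~1, yet it is indispensable there: in the one-sided configuration your argument as written produces no contradiction.
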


\begin{proof}
  Let $S\subset \tiling$ be a vertex-connected set of tiles that is stable for $2$-neighbour contamination.

  We prove, using Lemma \ref{lemma:technical_stablechain}, that $S$ is chain-convex, by checking all the items of that definition.
  Let $\chain$ be a chain of rhombuses in $\tiling$.

  We prove that $\{j \in \ZZ, \chain_j \in S\}$ is an interval of $\ZZ$.
  Let $\chain_j\in S$ and $\chain_{j+k}\in S$.
  If $\chain_j$ and $\chain_{j+k}$ are not connected in $S\cap(\tiling \setminus \chain)$, then as remarked in Remark \ref{remark:technical_corollary}, we have $\chain_{j+1}^{j+k-1} \subset S$.
  Otherwise, denote $\epath$ the edge path from a vertex $v_j$ of $\chain_j$ to a vertex $v_{j+k}$ of $\chain_{j+k}$ in the same half-space such that $v_j$ and $v_{j+k}$ are connected in $S\cap (\tiling \setminus \chain)$.
  For contradiction, assume that $\chain_{j+1}^{j+k-1}$ is not entirely in $S$. This means it contains a sub-interval that is entirely not in $S$. Up to renaming, we keep the notations $j$ and $j+k$ for that sub-interval, taken as big as possible (so that we still have $\chain_j$ and $\chain_{j+k}$ in $S$).
  By Lemma \ref{lemma:technical_stablechain}, if $\chain_{j+1}^{j+k-1}$ does not intersect $S$ then $\epath$ is in the boundary of $S$. Moreover, $\chain_{j+1}\notin S$. And yet, $\chain_{j+1}$ has two edge neighbours in $S$: its neighbour on the other side of the edge path $\epath$, and $\chain_j$. This contradicts the stability of $S$ for $2$-neighbour contamination.
  Consequently, $\chain_{j+1}^{j+k-1}$ is entirely in $S$ and so $\{j \in \ZZ, \chain_j \in S\}$ is an interval of $\ZZ$.
  \smallskip

  We now prove the second chain-convexity condition.
  Let $i<j$ such that $\chain_i$ and $\chain_j$ are in $\vneighbour{S}$ (not in $S$).
  Denote $v_i$ (resp. $v_j$) the vertex of $\chain_i$ (resp. $\chain_j$) the vertex touching $S$.
  As explained in Remark \ref{remark:technical_corollary}, if $v_i$ and $v_j$ are not in the same half-space or not connected in $S\cap (\tiling \setminus \chain)$ then $\chain_{i+1}^{j-1}$ is included in $S$.
  We now assume that $v_i$ and $v_j$ are in the same half-space and connected in $S \cap(\tiling \setminus \chain)$.
  Denote $\epath$ the edge path from $v_i$ to $v_j$ along the boundary of $\chain_i^j$.
  Up to decomposing $\chain_i^j$ in smaller intervals whose endpoints are not in $S$, we assume that $\chain_{i+1}^{j-1}$ either is entirely in $S$ or has no tile in common with $S$, while $\chain_i, \chain_j \notin S$.
  
  In the first case we have $\chain_{i+1}^{j-1}\subset \chain \cap S$ and $\chain_i, \chain_j \notin S$. By the fact we proved above that $\{j \in \ZZ, \chain_j \in S\}$ is an interval of $\ZZ$, we obtain that $S\cap \chain = \chain_{i+1}^{j-1}$. Additionally, the only elements of $\chain$ in $\vneighbour{S}$ are $\chain_i$ and $\chain_j$. Indeed, if it weren't the case, then there exists $k$ (assume by symmetry $k>j$) such that $\chain_k\in \vneighbour{S}$, and then by applying Lemma \ref{lemma:technical_stablechain} we obtain that the edge path $\epath$ from $\chain_j$ to $\chain_k$ is on the boundary of $S$. So $\chain_j$ has two edge neighbours in $S$: its neighbour on the other side of the edge path $\epath$, and $\chain_{j-1}$. This would contradict the stability of $S$.
  
  In the second case, $\chain_{i+1}^{j-1}$ has no tile in common with $S$. By applying Lemma \ref{lemma:technical_stablechain}, we obtain that the edge path $\epath$ from $v_i$ to $v_j$ is on the boundary of $S$ and so $\chain_{i+1}^{j-1}\subset \neighbour{S}$. In that case we also have that no tile of $\chain$ is in $S$. Indeed, if there exist $\chain_k\in S$ (assume by symmetry that $k>j$) then taking the smallest such $k$, we have $\chain_{k-1}\in\vneighbour{S}$, and the edge path $\epath'$ from $v_j$ to $v_k$ (a common vertex to $\chain_k$ and $\chain_{k-1}$) is in the boundary of $S$. And so once again $\chain_{k-1}$ has two neighbours in $S$ ($\chain_k$ and some $t\in S$ through $\epath'$), contradicting stability.

  Overall, we have proved that $S$ is chain-convex.
\end{proof}

We call \emph{fortress} (or finite obstacle) a non-empty finite patch $R$ that resists outside contamination.
That is, the configuration $c_R$ where tiles in $R$ are $0$s and all other tiles are $1$s is stable for $2$-neighbour contamination. See Fig.~\ref{fig:fortress} for an example of fortress for $2$-neighbour contamination with quadrilater tiles.
\begin{figure}[htp]
	\center \includegraphics[width=0.4\textwidth]{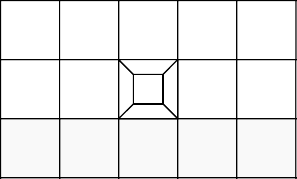}
	\caption{The central pattern of 4 trapezes and the small square forming a big square is a fortress for two neighbour contamination. Indeed, each trapeze has only one outside neighbour, so it cannot be contaminated by the outside if the pattern itself is initially non-contaminated.}
	\label{fig:fortress}
\end{figure}

Straightforwardly, the existence of fortresses in a structure is incompatible with a critical percolation threshold less than $1$.
Our first step to prove the existence of a non-trivial critical percolation threshold is to use Lemma~\ref{lemma:technical_stablechain} to prove the absence of such fortresses.

\begin{lemma}[Fortress]\label{lemma:no-fortress}
	In an edge-to-edge rhombus tiling there is no fortress (finite obstacle) for $2$-neighbour contamination.
\end{lemma}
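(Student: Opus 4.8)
The plan is to argue by contradiction, reformulating the existence of a fortress as the existence of a finite nonempty $R$ whose complement is a \emph{stable contaminated set}, and then deriving a contradiction from chain-convexity.

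First I would unpack the definition. A fortress $R$ makes the configuration $c_R$ (which is $0$ on $R$ and $1$ elsewhere) stable, which is exactly the statement that the set $S:=\tiling\setminus R$ of $1$-tiles is stable for $2$-neighbour contamination: no $0$-tile — i.e.\ no tile of $R$ — acquires two infected neighbours, so every tile of $R$ has at most one edge-neighbour outside $R$. Since $R$ is finite, $\bigcup R$ is bounded, and I would pass to the unbounded edge-connected component $S_\infty$ of $S$ (this avoids assuming $S$ is connected, which can fail when $R$ surrounds a hole). A short check shows $S_\infty$ is again stable: a tile of $R$ still has at most one neighbour in $S\supseteq S_\infty$, while a tile lying in a bounded component of $S$ has all its neighbours inside $R$ or inside bounded components of $S$ (an $S$-neighbour of it would lie in its own component, not $S_\infty$), hence none in $S_\infty$, so setting these tiles to $0$ creates no new infection.

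Then I would invoke Lemma~\ref{lemma:stablechainconvex} (itself resting on Lemma~\ref{lemma:technical_stablechain}): the vertex-connected stable set $S_\infty$ is chain-convex. Now pick any tile $t\in R$ and let $\chain$ be one of the two chains through $t$, say $\chain_0=t$. By uniform monotonicity (Lemma~\ref{lemma:uniform_monotonicity}) the barycenters of $\chain$ escape to infinity in both directions, so for $|k|$ large enough $\chain_k$ lies outside the bounded set $\bigcup R$ and therefore $\chain_k\in S_\infty$; on the other hand $\chain_0=t\in R$, so $\chain_0\notin S_\infty$. Thus $\{k\in\ZZ\mid \chain_k\in S_\infty\}$ contains arbitrarily large positive and negative indices yet misses $0$, so it is not an interval of $\ZZ$, contradicting item~1 of the definition of chain-convexity. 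Hence no fortress exists.

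The argument is short once chain-convexity is available, so the only real care points are bookkeeping rather than a single hard obstacle: correctly identifying the fortress condition with stability of the complement, and handling the possible disconnectedness of $S$ by working with $S_\infty$ and re-verifying stability. An alternative, more hands-on route would bypass chain-convexity and apply Lemma~\ref{lemma:technical_stablechain} directly to a maximal run $\chain_{a+1}^{b-1}$ of $R$-tiles along a chain, bounded by tiles $\chain_a,\chain_b\in S$: depending on whether the two contact vertices lie in the same half-space, one either forces $\chain_{a+2}^{b-2}\subset S$ (impossible, as these are in $R$) or puts the boundary edge-path on the boundary of $S$, which gives the end tile $\chain_{a+1}$ two neighbours in $S$ and again contradicts stability. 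This route also works, but it requires treating short runs as edge cases, which is why I would prefer the chain-convexity formulation.
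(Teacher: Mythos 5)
Your main argument is correct, but it takes a genuinely different route from the paper. The paper never invokes the chain-convexity lemma here: it picks a chain $\chain$ crossing the fortress $R$, extracts a maximal run $\chain_{i+1}^{j-1}\subset R$ with $\chain_i,\chain_j\in S$, and applies Lemma~\ref{lemma:technical_stablechain} \emph{twice}, once to each of the two edge paths $\epath$ and $\epath'$ running along the two sides of the chain segment; this shows every tile of $\chain_{i+1}^{j-1}$ is edge-adjacent to $S$ on both sides, so each has two infected neighbours, contradicting stability directly. Your route instead packages the work into Lemma~\ref{lemma:stablechainconvex}: reduce to the unbounded edge-connected component $S_\infty$, verify its stability, and contradict item~1 of chain-convexity (the interval property) using uniform monotonicity to see that a chain through a tile of $R$ has tiles of $S_\infty$ at both ends but not in the middle. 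This is legitimate and non-circular --- Lemma~\ref{lemma:stablechainconvex} depends only on Lemma~\ref{lemma:technical_stablechain}, and it is the later chain-$2d$-gon lemma, not this one, that cites the fortress lemma. What each approach buys: the paper's two-sided application is shorter and sidesteps any case analysis on half-spaces or on the connectivity of the complement (though it is in fact slightly informal on the latter point --- your explicit reduction to $S_\infty$, handling annular fortresses whose complement is disconnected, is more careful than the published proof). Your version leans on one extra standard planar fact that you use implicitly when writing ``the'' unbounded component: that all sufficiently distant tiles lie in a \emph{single} unbounded edge-connected component of $S$ (far tiles can be joined by tile paths along a large circle avoiding vertices); this deserves a one-line justification but is routine. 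Your sketched alternative at the end is essentially the paper's strategy, except the paper avoids your same-/different-half-space case split by exploiting both boundary paths simultaneously, which is the cleaner way to run that variant.
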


\begin{proof}
	Let $R$ be a fortress and $S$ be its complement.
	By definition, $S$ is stable for $2$-neigbour contamination.
	
	Let $\chain$ be a chain that intersects $R$. As $R$ is finite, $\chain$ enters and leaves $R$.
	That is, there exist $i<j$ such that $\chain_i,\chain_j \in S$, $\chain_{i+1}, \chain_{j-1} \in R$.
	Up to restricting to a sub-segment of chain, we assume that $\chain_{i+1}^{j-1}\subset R$.
	
	Denoting $v_i$ and $v_i'$ (resp. $v_j$ and $v_j'$) the vertices of $\chain_i \cap \chain_{i+1}$ (resp. $\chain_{j-1}\cap \chain_j$) such that $v_i$ and $v_j$ are in the same half-space delimited by $\chain$.
	Denoting $\epath$ (resp. $\epath'$) the edge path from $v_i$ to $v_j$ along $\chain$ (resp. from $v_i'$ to $v_j'$) and applying Lemma \ref{lemma:technical_stablechain} on both we obtain that the tiles of $\chain_{i+1}^{j-1}$ are edge adjacent to $S$ through both $\epath$ and $\epath'$, contradicting the stability of $S$.
\end{proof}

In what follows, we define some nice notions of polygons and paths in the rhombus tiling, then use Lemma~\ref{lemma:no-fortress} and the results above to prove that any chain-convex patch has such a well-behaved boundary.

\begin{definition}[Chain polygon and polypath] 
  We say that $\polygon=(\polygon_i)_{0\leq i <m}$ is a \emph{chain polygon} in $\tiling$ when :
  \begin{itemize}
    \item the $\polygon_i$ are distinct tiles, 
    \item consecutive $\polygon_i$ are edge adjacent, that is : for any $i$, $\polygon_{i+1\mod m} \in \neighbour{\polygon_i}$,
    \item the cycle has no cords, that is: for any $i,j$ such that $j\neq i \pm 1 \mod m$, $\polygon_j\notin\neighbour{\polygon_i}$.
  \end{itemize}

  Note that any chain polygon $\polygon=(\polygon_i)_{0\leq i < m}$ can be partitoned into $k$ chain segments for some integer $k$, that is: there exists chains $\chain_1,\dots \chain_k$ and indices $i_1,\dots i_{k-1}$ such that $\{\polygon_0, \dots \polygon_{i_1}\} \subset \chain_1$, \linebreak $\{\polygon_{i_1},\dots \polygon_{i_2}\} \subset \chain_2$, \dots $\{\polygon_{i_{k-1}},\dots \polygon_{m-1},\polygon_0\} \subset \chain_k$. Indeed, one can always choose $m$ chain segments made of only two tiles each.

  A chain polygon is called a chain $n$-gon if it can be partitioned into at most $n$ chain segments.

  We say that $\polygon = (\polygon_i)_{i\in \mathbb{Z}}$ is a simple path in $\tiling$ when the $\polygon_i$ are distinct tiles, consecutive $\polygon_i$ are edge-adjacent and there are no cords.

  We call \emph{chain polypath} a simple path $(\polygon_i)_{i\in\ZZ}$ that can be partitioned into $k$ half-chains and chain segments for some integer $k$.
  A chain polypath is called a $n$-polypath if it can be partitioned into at most $2$ half-chains and at most $n-2$ chain segments.
\end{definition}

\begin{lemma}[Chain-convex patches and chain $2d$-gons]
  Let $S\subsetneq \tiling $ be a chain-convex set of tiles, and $d$ be the number of edge directions in the tiling.

  If $S$ is finite, then the exterior tile boundary of $S$ is a chain $2d$-gon.

  If $S$ is infinite, then the exterior tile boundary of $S$ contains an infinite chain $2d$-polypath. More precisely, the exterior tile boundary of $S$ is the disjoint union of one or more chain polypaths of which the sum of the number of chain segments is at most $2d$.  
\end{lemma}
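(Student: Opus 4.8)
The plan is to traverse the exterior tile boundary of $S$ as a closed walk (for finite $S$) and to cut it into maximal runs of consecutive edges sharing a common direction, each such run being a chain segment; the entire difficulty then reduces to bounding the number of these runs by $2d$. I would first verify that the boundary is well behaved. Since $S$ is chain-convex it is simply connected (item~1 of the definition forbids holes), so for finite $S$ the complement is a single unbounded region and the boundary is a single simple, chordless cycle of edge-adjacent tiles, i.e.\ a chain polygon; for infinite $S$ the complement may split into several regions, each contributing one simple (bi-infinite or half-infinite) path, i.e.\ a chain polypath. The chordless and simple-cycle properties follow from the convex behaviour of $S$ established below.

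Next comes the decomposition into chain segments. Along the boundary, consecutive tiles $\polygon_k,\polygon_{k+1}$ share an edge of some direction $\edge(k)$. If three consecutive tiles $\polygon_{k-1},\polygon_k,\polygon_{k+1}$ share edges of the \emph{same} direction $\edge$, then, a rhombus having exactly two $\edge$-edges on opposite sides and $\polygon_{k-1}\neq\polygon_{k+1}$, the tiles $\polygon_{k-1}$ and $\polygon_{k+1}$ sit on the two opposite $\edge$-edges of $\polygon_k$; by uniform monotonicity (Lemma~\ref{lemma:uniform_monotonicity}) they are genuinely consecutive on a single $\edge$-chain. Hence any maximal run of boundary edges of a common direction is a chain segment — and a half-chain when the run is infinite, which can happen only in the infinite case, precisely when a chain runs tangent to $S$ forever. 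Consequently the number of chain (and half-chain) pieces equals the number of direction changes along the boundary.

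The core step is to bound the number of direction changes by $2d$. To each boundary edge I associate its oriented outer normal, which by Lemma~\ref{lemma:uniform_monotonicity} lies within a fixed angle of $\pm\edge(k)^\bot$; the $d$ edge directions being pairwise non-parallel, this yields $2d$ distinct oriented normals $\{\pm\edge_0^\bot,\dots,\pm\edge_{d-1}^\bot\}$. I would then prove that the boundary turns monotonically — intuitively, that $S$ is combinatorially convex — so that the outer normal sweeps these $2d$ directions in monotone cyclic order and each direction labels at most one run. The mechanism is a contradiction: a failure of monotonicity is a reflex (concave) corner, at which one exhibits a chain $\chain$ having two vertex-neighbours of $S$ on the same side, with the portion between them neither entirely contained in $S$ nor entirely adjacent to $S$, contradicting item~2 of chain-convexity (this is exactly the obstruction ruling out merely orthoconvex sets such as a plus-shape). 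Equivalently, one may argue per direction using the chain crossing lemma together with the classification of how a chain meets a chain-convex set: each edge direction $\edge$ contributes at most one ``ascending'' and one ``descending'' boundary run, hence at most $2$ segments of direction $\edge$, and $2d$ in total.

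I expect the monotone-turning (combinatorial convexity) step to be the main obstacle. One must make rigorous the notion of turning for a cyclic sequence of tiles, handle single-edge runs occurring at corners where the boundary crosses an $\edge$-chain transversally rather than running along it, distinguish chains that cross $S$ from chains tangent to it, and rule out two distinct $\edge$-chains each producing an ascending run — this last point is where the chain crossing lemma and the two-sided classification from the chain-convexity definition do the real work. Once monotone turning is in place, the finite case is a chain $2d$-gon immediately. For the infinite case the same count distributes the at most $2d$ segments among the boundary components, with each tangency escaping to infinity becoming a half-chain; thus the total number of segments over all components is at most $2d$, and at least one component is an infinite chain $2d$-polypath, carrying at most two half-chains and at most $2d-2$ chain segments.
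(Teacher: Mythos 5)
Your finite case follows essentially the paper's route, and your closing ``equivalently'' sentence is in fact the proof the paper gives: it groups consecutive tiles of $\vneighbour{S}$ by common edge direction into chain segments (each lying on a chain that, by chain-convexity, does not intersect $S$) and then bounds the count per direction by two via a three-chain pigeonhole — if three pairwise disjoint chains of the same edge direction $\edge$ all touched $\vneighbour{S}$ without meeting $S$, the middle one would, by uniform monotonicity (Lemma~\ref{lemma:uniform_monotonicity}) and the fact that same-direction chains never cross, have to cross either $S$ or one of the two outer chains, both impossible. Note that the paper never proves, nor needs, your ``monotone turning'' / combinatorial-convexity statement for the outer normal; the step you yourself flag as the main obstacle is exactly the step the paper avoids, and making cyclic turning rigorous for tile cycles (single-edge runs at corners, transversal crossings, etc.) is real work that your proposal only gestures at. So for finite $S$ you have the right skeleton, but the one mechanism you make primary is unproven, while the mechanism that actually closes the argument is left as an aside.

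The genuine gap is in the infinite case. You assert that each boundary component of an infinite chain-convex $S$ is a simple half- or bi-infinite path, but that is precisely what must be proved: a priori a connected component $P$ of $\vneighbour{S}$ could be a \emph{finite closed chain polygon} encircling a finite region disjoint from $S$ (which, being infinite, lies outside it). The paper excludes this with an ingredient entirely absent from your proposal, the no-fortress lemma (Lemma~\ref{lemma:no-fortress}): every tile of a component of $\vneighbour{S}$ has two edge-neighbours inside the component (opposite edges where $P$ runs along an edge of $S$, adjacent edges at corners where only a vertex is shared), so a finite component is a closed chain polygon, and that polygon together with its finite interior would be a fortress for $2$-neighbour contamination — each of its tiles has at most one edge-neighbour outside — contradicting Lemma~\ref{lemma:no-fortress}. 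Without this step (or a substitute argument, e.g.\ applying item~2 of chain-convexity to a chain through the enclosed region), nothing in your traversal setup prevents finite cycle components, and the rest of your infinite case — distributing the at most $2d$ segments among the components, with infinite runs becoming half-chains — has nothing to stand on.
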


\begin{proof}
  Let $S$ be a finite chain-convex set of tiles.

  First note that we can partition $\vneighbour{S}$ (which is finite, as $S$ is) as a chain-polygon of chains that do not intersect $S$. Indeed, any two consecutive tiles in $\vneighbour{S}$ (when going around $S$ in a given rotationnal order) are edge-adjacent and the chain connecting them does not intersect $S$ (by definition of chain-convexity). So grouping the tiles of $\vneighbour{S}$ by common edge direction, we get finitely many segments of chain.

  Actually, we have at most $2d$ chain segments in $\vneighbour{S}$ because for each edge orientation $\edge$ there are at most two chain segments of edge direction $\edge$ in $\vneighbour{S}$.
  This follows from the uniform monotonicity of chains, see Lemma~\ref{lemma:uniform_monotonicity}, and the fact that chains of same edge direction cannot cross.
  Indeed assume that for some edge direction $\edge$, there are three distinct chains $\chain$, $\chain'$ and $\chain''$ that do not intersect $S$ but have tiles in the vertex-neighbourhood of $S$. Up to renaming assume that $\chain'$ is between $\chain$ and $\chain''$ as in Fig.~\ref{fig:threechains}. By uniform monotonicity, $\chain'$ either crosses $S$ (which is a contradiction) or $\chain$ or $\chain''$ (which is also a contradiction).

  \begin{figure}[htp]
    \center
    \includegraphics[width=0.8\textwidth]{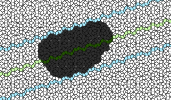}
    \caption{A chain convex set $S$ (in dark grey and dark green) with two adjacent chains of same edge direction (in light blue), any chain of the same direction in between (in green) has to intersect $S$.}
    \label{fig:threechains}
  \end{figure}

  \bigskip

  Let $S$ be an infinite chain-convex set of tiles. Let $P$ be a connected component of $\vneighbour{S}$, we prove that $P$ is an infinite chain $2d$-polypath.
  This proof is decomposed in two parts: first we prove that $P$ is infinite, and second we prove that it intersects non-trivially (meaning for at least two consecutive tiles) at most two chains for each chain direction.
  As $P$ is a connected component of $\vneighbour{S}$ and $S$ is infinite and chain-convex, each tile $t\in P$ has two edge-neighbours in $P$, either two opposite edges (the tiles are along a chain) if $P$ shares an edge with $S$, or two adjacent edges if $P$ only shares a vertex with $S$.
  This implies that if $P$ is finite, then it is a chain polygon. This polygon induces a finite interior and an infinite exterior. As $S$ is infinite it cannot be the interior, and hence $P$ together with its interior forms a fortress (finite obstacle) for $2$-neighbour contamination which contradicts Lemma \ref{lemma:no-fortress}. So $P$ is infinite.
  
  As detailed in the first part of the proof, each chain that has two consecutive tiles in $\vneighbour{S}$ does not insersect $S$.
  This implies, using the same proof as above, that for each chain directions there are at most two segments (or half-chains) of chain of that direction in $P$.
  This implies that $P$ is an infinite chain $2d$-polypath (recall that $2d$-polypath means it has at most $2d$ components).
\end{proof}

Note that in the case of infinite $S$, $S$ might be an infinite strip between two non-intersecting chains of rhombuses. In that case, $\vneighbour{S}$ contains exactly 2 connected components which are both an infinite chain.

\begin{corollary}[Finite stable patches]\label{corollary:bootstrap_stability}
  Any finite patch of tiles $P\subset \tiling$ that is stable for $2$-neighbour contamination is delimited by a chain-polygon with at most $2d$ sides with all $0$s.

  The exterior boundary of any infinite simply-connected set of tiles $P\subsetneq \tiling$ that is stable for $2$-neighbour contamination contains an infinite half-chain with all $0$s.
\end{corollary}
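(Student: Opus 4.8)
The plan is to obtain both statements by chaining the two structural lemmas already proved, Lemma~\ref{lemma:stablechainconvex} and the preceding lemma on chain-convex patches and chain $2d$-gons, and then observing that the boundary of a stable cluster consists of tiles in state $0$.

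First I would unwind the hypotheses. Here $P$ denotes the set of contaminated ($1$) tiles, and ``stable for $2$-neighbour contamination'' means that the configuration equal to $1$ exactly on $P$ is a fixed point of $F$, equivalently that every tile in $\tiling\setminus P$ has at most one edge-neighbour in $P$. A patch is edge-connected, and since two tiles sharing a full edge also share its endpoints, edge-connectedness implies vertex-connectedness; a simply-connected set of tiles is in particular connected, hence again vertex-connected. In both the finite and the infinite case, $P$ therefore satisfies the hypotheses of Lemma~\ref{lemma:stablechainconvex}, which gives that $P$ is chain-convex.

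For the finite case I would apply the lemma on chain-convex patches to the finite chain-convex set $P$ (note $P\subsetneq\tiling$ automatically, since $\tiling$ is infinite): it states directly that the exterior tile boundary $\vneighbour{P}$ is a chain $2d$-gon. Since every tile of $\vneighbour{P}$ lies in $\tiling\setminus P$, each is in state $0$, so the enclosing chain-polygon has all $0$s, as required. For the infinite case, the same lemma applied to the infinite chain-convex set $P\subsetneq\tiling$ yields that the exterior boundary contains an infinite chain $2d$-polypath. It then remains to extract a half-chain: by definition a chain polypath decomposes into finitely many half-chains and finitely many finite chain segments, so a bi-infinite polypath cannot be covered by the finite segments alone, and at least one of its pieces is an infinite half-chain. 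As before, the tiles of this half-chain lie in $\tiling\setminus P$ and are thus all in state $0$.

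I expect no real obstacle at this level, since all the geometric work has been done in the earlier lemmas; the only points that need care are verifying that the connectivity assumptions of Lemma~\ref{lemma:stablechainconvex} are met (the passage from edge- or simple-connectedness to vertex-connectedness) and the elementary but necessary extraction of an infinite half-chain from an infinite polypath.
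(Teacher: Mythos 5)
Your proposal is correct and is exactly the derivation the paper intends: the corollary is stated without a written proof precisely because it follows by chaining Lemma~\ref{lemma:stablechainconvex} (stable vertex-connected sets are chain-convex) with the lemma on chain-convex patches and chain $2d$-gons, plus the observation that exterior-boundary tiles lie outside $P$ and are hence in state $0$. Your two points of care --- checking that edge-/simple-connectedness yields the vertex-connectedness needed by Lemma~\ref{lemma:stablechainconvex}, and extracting an infinite half-chain from a bi-infinite polypath that has only finitely many finite chain segments --- are precisely the small verifications the paper leaves implicit, and both are handled correctly.
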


\subsection{Probabilistic arguments}
\label{sec:probargs}

In this subsection, we fix a rhombus tiling $T$ and a \MNVPC{} measure $\mu$.

Let $\chainzero$ be the set of configurations that contain at least a half-chain of zeros,\emph{i.e.},
\[ \chainzero := \{ x \in \configurations \mid \exists \text{ a half-chain } \chain=(\chain_i)_{i\in\NN}, \forall i\geq 0, x_{\chain_i}=0\}\].

\begin{lemma}
	\label{lemma:muchainzero}
  $\mu(\chainzero) = 0$.
\end{lemma}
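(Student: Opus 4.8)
The plan is to write $\chainzero$ as a countable union of null events and to conclude by subadditivity. For a half-chain $\chain=(\chain_i)_{i\in\NN}$, set $A_\chain=\{x\in\configurations \mid \forall i\geq 0,\ x_{\chain_i}=0\}$, the event that $\chain$ is entirely made of zeros, so that $\chainzero=\bigcup_\chain A_\chain$, the union ranging over all half-chains of $\tiling$. The first thing to establish is that this union is countable. By the definition of a half-chain, once its starting tile, its edge direction $\edge$, and its orientation $\vec{n}$ are fixed, the tile $\chain_{i+1}$ is forced: it is the unique tile sharing with $\chain_i$ the $\edge$-edge lying on the $\vec{n}$ side. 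Hence the whole half-chain is rigidly determined by these data, so each tile is the starting tile of at most four half-chains (two orientations for each of the two edge directions of a rhombus). Since $\tiling$ has countably many tiles, there are only countably many half-chains in all.

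Next I would check that each $A_\chain$ is null. By uniform monotonicity (Lemma~\ref{lemma:uniform_monotonicity}), the barycenters of the $\chain_i$ strictly advance in the direction $\vec{n}$, so the tiles $\chain_0,\dots,\chain_n$ are pairwise distinct, and $A_\chain=\bigcap_n [0^{\{\chain_0,\dots,\chain_n\}}]$ is a closed (hence Borel) set contained in $[0^{\{\chain_0,\dots,\chain_n\}}]$ for every $n$. The adjacency graph of a rhombus tiling has degree at most $4$, hence is of bounded degree, and $\mu$ is \MNVPC{}; Lemma~\ref{lem:expdecay} therefore furnishes a constant $\beta$ with $0<\beta<1$ such that $\mu([0^{\{\chain_0,\dots,\chain_n\}}])\leq \beta^{n+1}$. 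Letting $n\to\infty$ gives $\mu(A_\chain)=0$.

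It then remains only to combine the two observations by countable subadditivity, $\mu(\chainzero)\leq\sum_\chain\mu(A_\chain)=0$. The argument is short and I do not anticipate a genuine obstacle; the one point requiring a little care is the countability claim, namely that a half-chain branches nowhere once its start, direction, and orientation are chosen. This is precisely what distinguishes the present lemma from the harder ``wall''-counting of Section~\ref{sec:probargs}: there the enclosing polygons may branch, and one must bound their number by a polynomial in the length before summing, whereas here subadditivity over a countable family of exactly-null events already suffices.
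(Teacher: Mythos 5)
Your proof is correct and follows essentially the same route as the paper: decompose $\chainzero$ as a countable union over half-chains (countable since each is determined by starting tile, edge direction and orientation), bound each all-zero half-chain event by $\beta^n$ via Lemma~\ref{lem:expdecay}, and conclude by the union bound. Your added justifications (distinctness of tiles along a half-chain via uniform monotonicity, and the degree bound needed for Lemma~\ref{lem:expdecay}) are details the paper leaves implicit, not a different approach.
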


\begin{proof}
  For a fixed half-chain $\chain=(\chain_i)_{i\in\NN}$ the set ${\chainzero_\chain := \{ x \in \configurations \mid \forall i\geq 0, x_{\chain_i}=0\}}$ has measure $0$ because
  \[\chainzero_\chain = \bigcap_{i\in\NN} [0^{\chi_i}]\]
  and ${\displaystyle\mu\bigl(\bigcap_{0\leq i\leq n}[0^{\chi_i}]\bigr)\leq\beta^n}$ by Lemma~\ref{lem:expdecay} where ${0<\beta<1}$.

  Since there are only countably many half-chains (a half-chain is uniquely identified by an initial tile, a direction and an orientation), the union bound gives ${\mu(\chainzero)=0}$ because
  \[\chainzero = \bigcup_{\chain\text{ half-chain}}\chainzero_\chain.\]
\end{proof}

Let $\blocking$ be the set of configurations that do not invade but do not contain any half-chain of zeros, \emph{i.e.},
\[ \blocking := \invasion^\complement \cap \chainzero^\complement\]

Recall that we call chain $2d$-gon a chain polygon with at most $2d$ sides.

\begin{lemma}
	\label{lemma:blockingenclosed}
  In the finitely blocking configurations $\blocking$, every $1$ is enclosed in a closed chain $2d$-gon of $0$, \emph{i.e.},
  \[ \blocking \subset \{ x \in \configurations \mid \forall v \in \tiling, [x_v=1 \Rightarrow \exists \text{closed chain $2d$-gon } c \text{ around } v  \text{ of } 0]\}\]
\end{lemma}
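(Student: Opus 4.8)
The plan is to read off the desired chain $2d$-gon from the limit fixed point of the dynamics: around any initially infected tile I would isolate the stable cluster that permanently traps it, and show its boundary is a polygon of $0$s of the required shape.

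So fix a configuration $x\in\blocking$ and a tile $v$ with $x_v=1$. Let $x^\infty$ be the limit configuration of the sequence $\bigl(F^n(x)\bigr)_n$ (which exists by freezing), and set $S=\{t\in\tiling\mid x^\infty_t=1\}$. Since $x^\infty$ is a fixed point of $F$, no $0$-tile has two neighbours in $S$, which is exactly the statement that $S$ is stable for $2$-neighbour contamination. By freezing, $x_v=1$ forces $x^\infty_v=1$, so $v\in S$; let $S_v$ be the vertex-connected component of $S$ containing $v$. First I would check that $S_v$ is again stable: a tile outside $S$ has at most one neighbour in $S\supseteq S_v$, while a tile of $S\setminus S_v$ can have no edge-neighbour in $S_v$ at all (such a neighbour would be a vertex-neighbour, forcing the tile into the same component). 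Being vertex-connected and stable, $S_v$ is chain-convex by Lemma~\ref{lemma:stablechainconvex}, hence simply connected.

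Next I would use both defining properties of $\blocking$, namely $x\notin\invasion$ and $x\notin\chainzero$. Since $x\notin\invasion$, some tile is never infected, so $S\subsetneq\tiling$ and therefore $S_v\subsetneq\tiling$, guaranteeing a non-empty exterior boundary. Moreover $S_v$ must be finite: were it infinite, the infinite case of Corollary~\ref{corollary:bootstrap_stability} would produce in its exterior boundary an infinite half-chain all of whose tiles are $0$ in $x^\infty$, hence (by freezing) $0$ in $x$, so that $x\in\chainzero$ — contradicting $x\notin\chainzero$. Thus $S_v$ is a finite, stable, chain-convex patch.

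Finally I would apply the finite case of Corollary~\ref{corollary:bootstrap_stability} to $S_v$: its exterior boundary $\vneighbour{S_v}$ is a chain-polygon with at most $2d$ sides, that is, a chain $2d$-gon. Every tile of this boundary lies outside $S$ — a boundary tile belonging to $S$ would be vertex-adjacent to $S_v$ and thus would itself lie in $S_v$, which is absurd — so each such tile is $0$ in $x^\infty$ and, by freezing, $0$ in $x$. This chain $2d$-gon bounds a region whose interior is exactly $S_v\ni v$, so it is a closed chain $2d$-gon of $0$s enclosing $v$, which is the claim. The main obstacle is the bookkeeping across the three objects involved — the initial configuration $x$, its fixed point $x^\infty$, and the geometric cluster $S_v$ — and in particular verifying that the tiles one reads off as ``$0$'' on the boundary are genuinely $0$ in $x$ (which is precisely what freezing provides) and that restricting to a vertex-connected component preserves both stability and the property that the boundary avoids $S$.
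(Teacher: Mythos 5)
Your proposal is correct and follows essentially the same route as the paper's proof: pass to the limit configuration $x^\infty$, apply Corollary~\ref{corollary:bootstrap_stability} to the stable cluster containing $v$, rule out the infinite case because its half-chain of $0$s would contradict $x\notin\chainzero$, and transfer the boundary $0$s back to $x$ by freezing. You merely make explicit some bookkeeping the paper leaves implicit (stability of the vertex-connected component $S_v$, and that its exterior boundary avoids all of $S$, not just $S_v$), which is a sound refinement rather than a different argument.
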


\begin{proof}
  This derives from the definition of $\blocking$ together with Corollary~\ref{corollary:bootstrap_stability}.
  Consider some configuration $c\in \blocking$. As the $2$-neighbour contamination $F$ is monotone and freezing, there exists a limit configuration $c^\infty = \lim\limits_{n\to \infty}F^n(c)$ and since $c \in \blocking \subset \invasion^\complement$, $c^\infty$ is not the $1$-uniform configuration, \emph{i.e.}, $c^\infty\neq 1^\tiling$.

  Recall that because $F$ is freezing, we have that for any tile $t\in T$, $c_t \leq c^\infty_t$.
  So we also have $c^\infty \in \blocking$.

  As $c^\infty$ is stable for \bootstrapCA{} and not the $1$-uniform configuration, any $1$ in $c^\infty$ is in a stable contaminated cluster.
  Corollary~\ref{corollary:bootstrap_stability} states that any stable contaminated cluster is either infinite or enclosed in a chain $2d$-gon of zeros.
  The case of an infinite contaminated cluster $P \subsetneq T$ is impossible, as the exterior boundary of $P$ would contain an infinite half-chain of zeros which is a contradiction with $c^\infty \in \blocking$.

  So any tile $t$ is either in state $0$ in $c$, in which case it is enclosed in the trivial chain $2d$-gon of zeros that is itself; or in state $1$ in $c$ in which case it is also in state $1$ in $c^\infty$ and therefore belongs to some finite stable cluster and is enclosed in a non-trivial chain $2d$-gon of zeros.
\end{proof}

\begin{definition}[Enclosed configurations]
  Let $\enclosed{t}$ be the set of configurations $x$ such that tile $t$ is enclosed in a closed chain $2d$-gon of $0$s, including the configurations where $x_t = 0$ as $t$ is enclosed in a trivial $2d$-gon.

  Let $\enclosed{t,\leq n}$ be the set of configurations $x$ such that $t$ is enclosed in a closed chain 2d-gon of $0$s of length at most $n$, and $\enclosed{t,>n}$ be $\enclosed{t}\setminus\enclosed{t,\leq n}$.
\end{definition}

By definition, for any $t$ and any $n$, we have $\enclosed{t,n} \subseteq \enclosed{t}$ and $\mu(\enclosed{t,n})\leq \mu(\enclosed{t})$. 
We will actually show that ${\mu(\enclosed{t})}$ is uniformly bounded away from $1$.
The first step is the following polynomial bound on the number of $2d$-gons.

\begin{lemma}[$2d$-gon counting]
  Given a rhombus tiling $T$ with finitely many edge directions.
  There exists a polynomial $Q$ such that for any tile $t$, the number of closed convex chain $2d$-gons around $t$ of length $n$ is at most $Q(n)$.
  \label{lemma:counting}
\end{lemma}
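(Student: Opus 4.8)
The plan is to show that a convex chain $2d$-gon around $t$ is completely determined by a small amount of combinatorial data, and then to count that data. Specifically, I claim such a polygon is reconstructed uniquely from: (i) an \emph{anchor tile} $\polygon_0$ on the polygon; (ii) the cyclic sequence of edge directions and orientations of its chain segments, read starting from $\polygon_0$; and (iii) the sequence of lengths of those segments. The reconstruction is a deterministic walk: from the current tile, an edge direction together with an orientation selects a unique chain (each rhombus lies on exactly one chain per edge direction it carries, and the orientation fixes the side), and advancing the prescribed length along that chain gives the next corner, where we switch to the next direction. Hence the number of polygons is bounded by the product of the numbers of choices for (i), (ii) and (iii).

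For the directions and orientations in (ii): a convex chain $2d$-gon is partitioned into at most $2d$ chain segments, and, exactly as in the proof of the chain-convex patches lemma, uniform monotonicity together with the chain crossing lemma forces each of the $d$ edge directions to occur at most twice. The number of admissible direction-orientation sequences is therefore bounded by a constant $C_d$ depending only on $d$. For the lengths in (iii): reading the polygon from an arbitrary anchor $\polygon_0$ splits it into at most $2d+1$ chain segments whose lengths are positive integers summing to $n$, so the number of length sequences is the number of compositions of $n$ into at most $2d+1$ parts, which is $O(n^{2d})$ — polynomial in $n$.

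The delicate point is (i): bounding the number of admissible anchors, since a priori a polygon tile could sit anywhere within graph distance $O(n)$ of $t$, which is an exponentially large set. To anchor the polygon to $t$, I fix one edge direction $\vec{a}$ of the tile $t$ and let $\chain^{(a)}$ be the unique chain of direction $\vec{a}$ through $t$; this chain depends only on $t$, not on the polygon. Let $S$ be the chain-convex region enclosed by the polygon, so $t\in S$. By chain-convexity $\chain^{(a)}\cap S$ is an interval of the chain containing $t$, and the polygon tile $\polygon_0$ at which $\chain^{(a)}$ first exits $S$ on the positive-$\vec{a}^\bot$ side lies both on $\chain^{(a)}$ and on the polygon. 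I take this $\polygon_0$ as the anchor. By uniform monotonicity (Lemma~\ref{lemma:uniform_monotonicity}) consecutive tiles of $\chain^{(a)}$ advance by at least $\theta$ along $\vec{a}^\bot$, while $S$, having perimeter $n$ and tiles of bounded diameter, has $\vec{a}^\bot$-extent $O(n)$; hence $\polygon_0$ is among the $O(n)$ tiles of the fixed chain $\chain^{(a)}$ lying within index $O(n)$ of $t$. This yields at most $O(n)$ admissible anchors.

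Multiplying the three bounds gives at most $O(n)\cdot C_d\cdot O(n^{2d}) = O(n^{2d+1})$ convex chain $2d$-gons around $t$ of length $n$, so some polynomial $Q$ works. The main obstacle is precisely step (i): forcing the anchor onto the single predetermined chain $\chain^{(a)}$ and bounding its position linearly in $n$, which is what converts an a priori exponential count of candidate positions into a merely linear one.
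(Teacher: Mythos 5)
Your proof is correct, and its overall skeleton (anchor tile, plus a constant number of direction--orientation patterns, plus a polynomial count of segment-length sequences) matches the paper's; the genuine divergence is in how the anchor is counted. The paper handles this in two lines with an area argument: the anchor is within graph distance $n$ of $t$, hence within Euclidean distance $D(n-1)$ where $D$ bounds tile diameters, and since every tile has area at least $A$ there are at most $\pi D^2 n^2/A$ candidate anchors, giving $Q(n) = \frac{\pi D^2}{A}n^{2d+2}$. You instead force the anchor onto the single chain $\chain^{(a)}$ of a fixed edge direction through $t$, using chain-convexity of the enclosed region (so that $\chain^{(a)}\cap S$ is a finite interval) plus uniform monotonicity (Lemma~\ref{lemma:uniform_monotonicity}) to place the exit tile within $O(n)$ indices of $t$; this buys a factor of $n$ (degree $2d+1$ instead of $2d+2$), at the price of needing chain-convexity of the enclosed region and a Jordan-type argument that the chain's first exit tile lies on the polygon. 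One corrective remark: your stated motivation --- that the ball of graph radius $O(n)$ around $t$ is ``a priori exponentially large'' --- is false in this setting; the same area argument shows balls in a rhombus tiling with finitely many tiles up to translation grow only quadratically, which is exactly why the paper's cruder bound suffices. Your concern is legitimate for general bounded-degree graphs (the paper's open questions note that the counting argument fails on graphs of exponential growth), but here the extra chain-anchoring machinery is an optimization, not a necessity.
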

\begin{proof}
  Let us first remark that a chain $2d$-gon of length $n$ is in particular a $2d$-partition of $[n]$.

  Remark also that, when walking a convex chain $2d$-gon with exactly $2d$ sides in clockwise orientation, the order of the chain edge directions is fixed. If the $2d$-gon has less than $2d$ sides some directions are absent but the order of chain edge directions respects the fixed order. 

  This implies that given a tile $t$, and a starting point $p_0$ there are at most as many convex chain $2d$-gons of length $n$ arount $t$ and starting at $p_0$ as there are $2d$ partitions of $[n]$. In particular there are less than $n^{2d}$ such chain $2d$-gons. 

  Now remark that since $p_0$ is the starting point of a chain $2d$-gon of length $n$ that encloses $t$, the adjacency distance $d_T(p_0,t)$ is less than $n$.

  As the tiling $T$ has finitely many tiles up to translation, there is a maximal diameter of tiles $D$ and a minimal area of tiles $A$. So the euclidean distance from $p_0$ to $t$ is at most $D(n-1)$, and there are at most $\pi D^2n^2/A$ such starting points $p_0$ in $T$.

  Overall there are at most $Q(n):=\tfrac{\pi D^2}{A} n^{2d+2}$ convex chain $2d$-gons of length $n$ around $t$.

  This polynomial depends only on the tileset, that is, the set of tiles in $T$ up to translation. In particular $Q(n)$ is of degree $2d+2$ where $d$ is the number of edge directions in $T$, and the leading coefficient depends on the shapes of the tiles (minimal area and maximal diameter).
\end{proof}

\begin{lemma}[Uniform bound]
  There exists a $\lambda<1$ such that for any tile $t\in T$, $\mu(\enclosed{t})\leq \lambda < 1$.
  \label{lemma:uniform_bound}
\end{lemma}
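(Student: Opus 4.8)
The plan is to bound $\mu(\enclosed{t})$ uniformly away from $1$ by a union bound over all possible enclosing chain $2d$-gons, combined with the exponential decay of Lemma~\ref{lem:expdecay} and the polynomial counting of Lemma~\ref{lemma:counting}. Concretely, a configuration $x\in\enclosed{t}$ either has $x_t=0$, or there exists a closed chain $2d$-gon $c$ of $0$s of some length $n$ enclosing $t$. I would first observe that $\enclosed{t}=\bigcup_{n\geq 0}\bigcup_{c}[0^{c}]$, where the inner union runs over all closed chain $2d$-gons $c$ around $t$ of length exactly $n$ (the case $n=0$, i.e.\ $x_t=0$, being the trivial gon consisting of $t$ itself). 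By Lemma~\ref{lem:expdecay}, each such event satisfies $\mu([0^{c}])\leq\beta^{n}$ for some fixed $0<\beta<1$ depending only on $\tiling$ and $\mu$, and by Lemma~\ref{lemma:counting} the number of such gons of length $n$ is at most $Q(n)$ for a fixed polynomial $Q$ independent of the tile $t$.

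Putting these together, the union bound gives, uniformly in $t$,
\[
\mu(\enclosed{t}) \ \leq\ \sum_{n\geq 0} Q(n)\,\beta^{n}.
\]
Since $\beta<1$ and $Q$ is a polynomial, this series converges to some finite value. The subtlety is that this bound alone does not obviously give something strictly below $1$: the series could in principle sum to more than $1$. I would handle this by isolating the smallest gons. The key point is that the trivial $n=0$ event is exactly $[0^{\{t\}}]$, whose measure is at most $1-\alpha$ where $\alpha>0$ is the non-vanishing constant of $\mu$ (and more generally the small-$n$ terms are controlled). I would split the sum as $\mu([0^{\{t\}}])$ plus a tail $\sum_{n\geq n_0}Q(n)\beta^{n}$ that can be made arbitrarily small, but since the head terms for $1\le n<n_0$ need not be small, a cleaner route is to bound each term by noting that every enclosing gon of length $n$ contains the ball-of-radius structure forcing additional $0$s; the honest uniform statement is simply that the convergent series $\sum_{n\geq 0}Q(n)\beta^n$ takes a value that is finite and, crucially, can be shown to be $<1$ after possibly strengthening $\beta$.

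The main obstacle, then, is precisely ensuring strict inequality $\lambda<1$ rather than merely a finite bound. The clean fix is to exploit that $\beta$ in Lemma~\ref{lem:expdecay} is not canonical: its proof sets $\beta=(1-\alpha)^{1/(2\Delta^k)}$, so $\beta$ can be pushed as close to $1$ as one likes is false, but one does have freedom in the exponent. A more robust argument multiplies through: since any enclosing gon of length $n\geq 1$ forces at least $n$ vertices to be $0$, and these can be thinned to $\Omega(n)$ pairwise $k$-separated vertices, one actually gets $\mu([0^c])\leq \beta^{cn}$ for a constant $c>0$, and choosing this decay rate fast enough against the fixed-degree polynomial $Q$ makes the tail sum over $n\geq 1$ strictly less than $\alpha$, whence
\[
\mu(\enclosed{t}) \ \leq\ (1-\alpha) + \sum_{n\geq 1}Q(n)\beta^{cn} \ <\ (1-\alpha)+\alpha \ =\ 1.
\]
Thus I would take $\lambda$ to be this explicit value, which is independent of $t$ because both $\beta$ (from $\mu$ and the degree bound) and $Q$ (from the tileset) are uniform over tiles. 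The whole argument is a uniform first-moment/union bound; the only genuine care needed is the bookkeeping that guarantees the geometric decay dominates the polynomial growth strongly enough to land strictly below $1$, for which reserving the $n=0$ term and absorbing the rest into $\alpha$ is the natural mechanism.
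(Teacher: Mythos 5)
Your setup---a union bound over all enclosing chain $2d$-gons, combining the counting of Lemma~\ref{lemma:counting} with the exponential decay of Lemma~\ref{lem:expdecay}---matches the paper's first step, but the mechanism you invoke to land strictly below $1$ has a genuine gap. The inequality you ultimately need, $\sum_{n\geq 1} Q(n)\beta^{cn} < \alpha$, is not something you can arrange: every quantity in it is fixed by the tiling and the measure. The constant $\beta$ of Lemma~\ref{lem:expdecay} already incorporates the thinning to pairwise $k$-separated vertices (its proof takes ${\beta=(1-\alpha)^{1/(2\Delta^k)}}$), so your refined bound $\mu([0^c])\leq\beta^{cn}$ buys nothing new, and there is no free exponent to tune against the fixed polynomial $Q$. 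Concretely, for a Bernoulli measure with small parameter $p$ one has $\alpha=p$ tiny and $\beta$ close to $1$, so the head terms $Q(n)\beta^{n}$ corresponding to the shortest enclosing gons already dwarf $\alpha$, and the full series can be far larger than $1$. Since the lemma must hold for every \MNVPC{} measure, including small-$p$ Bernoulli measures, a bare union bound over all gons---even with the $n=0$ term reserved as you suggest---cannot produce $\lambda<1$.

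The paper closes exactly this gap using two ingredients of the \MNVPC{} hypothesis that your proposal never touches. First, non-vanishing is applied not to the single tile $t$ but to the ball ${P=B(t,N)}$: one has ${\mu([1^P])\geq\alpha}$ for some ${\alpha>0}$ depending on $N$ but not on $t$, where $N$ is chosen so that the tail ${\lambda_0:=\sum_{n\geq N}Q(n)\beta^n<1}$. On the event $[1^P]$, any chain $2d$-gon of zeros around $t$ must avoid and enclose all of $P$, hence has length at least $N$---this is what eliminates the uncontrollable head of the series, leaving only the small tail, so that ${\mu(\enclosed{P})\leq\lambda_0}$. Second, $[1^P]$ and $\enclosed{P}^\complement$ are both upward-closed, so the positive-correlation (Harris/FKG) part of \MNVPC{} yields
\[
\mu([1^P]\cap\enclosed{t}^\complement)=\mu([1^P]\cap\enclosed{P}^\complement)\geq\alpha(1-\lambda_0)>0,
\]
whence ${\lambda:=1-\alpha(1-\lambda_0)}$ works uniformly in $t$. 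Note that the correlation step is essential: without it one cannot lower-bound the intersection by the product, as $[1^P]$ and $\enclosed{P}$ are not independent events. Your argument uses non-vanishing only at single sites and never uses positive correlation, and these are precisely the two tools that make the strict inequality go through.
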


\begin{proof}
  This lemma is a consequence of Lemma \ref{lemma:counting}. The key element is that there are polynomially many convex $2d$-gons of length $n$, but having $0$s on the whole $2d$-gon has exponential probabilistic cost.

  Consider the series $\sum\limits_{n\geq 0} Q(n)\beta^n$ where ${0<\beta<1}$ is the constant given by Lemma~\ref{lem:expdecay} on measure $\mu$. As $\beta<1$, and $Q$ is a polynomial the series converges. This implies that there exists $N\in \mathbb{N}$ such that $\sum\limits_{n\geq N}Q(n)\beta^n < 1$. Denote $\lambda_0 := \sum\limits_{n\geq N}Q(n)\beta^n$.

  Now consider a tile $t\in T$ and the patch ${P=B(t,N)}$ made of all tiles around $t$ up to distance $N$.
  Since $\mu$ is non-vanishing, there is some constant $\alpha>0$ depending on $N$ but not on $t$ such that ${\mu([1^P])\geq\alpha}$.
  
  \begin{adjustwidth}{1cm}{} 
    \begin{claim} 
      ${\mu([1^P] \cap \enclosed{t}^\complement)\geq \alpha(1-\lambda_0)}$
    \end{claim}
    \begin{proof}[Proof of the claim]
      Denote $\enclosed{P}$ the set of configuration where $P$ is enclosed in a convex chain $2d$-gon of zeros (not intersecting $P$), and $\enclosed{P,=n}$ the subset where it is enclosed in a convex chain $2d$-gon of length exactly $n$.
      
      Remark that $[1^P]\cap \enclosed{t}^\complement = [1^P]\cap \enclosed{P}^\complement$ since any chain $2d$-gon of zeros around $t$ in a configuration of $[1^P]$ necessarily avoids and encloses $P$.
      
      Remark also that if $n<N$, $\enclosed{P,=n} = \emptyset$ and for $n\geq N$ we have $\mu(\enclosed{P,=n})\leq Q(n)\beta^n$, by the union bound, as there are at most $Q(n)$ convex $2d$-gons of length $n$ around $P$ and $\enclosed{P,=n}$ is the union on all these $2d$-gons $\polygon$ of the event $[0^\polygon]$ which has measure less than $\beta^n$ by Lemma~\ref{lem:expdecay}.
      With $\enclosed{P} = \bigcup\limits_{n\in \mathbb{N}} \enclosed{P,=n}$ we have $\mu(\enclosed{P}) \leq \lambda_0 = \sum\limits_{n\geq N} Q(n)\beta^n$.

Both $[1^P]$ and $\enclosed{P}^\complement$ are upward-closed events so by hypothesis on $\mu$ they are positively correlated. Therefore we have: $\mu([1^P]\cap \enclosed{t}^\complement) \geq \alpha(1-\lambda_0)$ which proves the claim.
    \end{proof}
    \end{adjustwidth}
From the claim we deduce that $\mu(\enclosed{t}^\complement) \geq \alpha(1-\lambda_0)$.
Denote $\lambda := 1- \alpha(1-\lambda_0)$.
We have $\mu(\enclosed{t})\leq \lambda < 1$.
    \end{proof}

\begin{lemma}[Uniform approximation]
  For any $\epsilon$,
  there exists $n$ such that for any $t\in T$ we have
  \[\mu(\enclosed{t,> n}) \leq \epsilon\]
  \label{lemma:uniform_approximation}
\end{lemma}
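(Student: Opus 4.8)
The plan is to mirror the structure of the proof of Lemma~\ref{lemma:uniform_bound}: bound $\enclosed{t,> n}$ from above by a union bound over all enclosing chain $2d$-gons of length exceeding $n$, and then recognize that bound as the tail of a convergent series whose terms do not depend on the tile $t$. The two quantitative inputs are already available: Lemma~\ref{lemma:counting} controls how many gons of each length can surround a fixed tile, and Lemma~\ref{lem:expdecay} controls the measure of an all-$0$ event on a gon.

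First I would record the relevant set inclusion. If $x\in\enclosed{t,> n}=\enclosed{t}\setminus\enclosed{t,\leq n}$, then $x\in\enclosed{t}$ guarantees that $t$ is enclosed in some chain $2d$-gon of $0$s, while $x\notin\enclosed{t,\leq n}$ forbids any such gon of length at most $n$; hence the witnessing gon has length strictly greater than $n$. This yields
\[\enclosed{t,> n}\ \subseteq\ \bigcup_{m> n}\ \bigcup_{\substack{\polygon\text{ chain }2d\text{-gon around }t\\ |\polygon|=m}}[0^\polygon].\]
Applying the union bound, then Lemma~\ref{lemma:counting} (at most $Q(m)$ gons of length $m$ around $t$) and Lemma~\ref{lem:expdecay} ($\mu([0^\polygon])\leq\beta^{|\polygon|}$ with $0<\beta<1$), I obtain the uniform estimate
\[\mu(\enclosed{t,> n})\ \leq\ \sum_{m> n}Q(m)\,\beta^{m}.\]

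Now I would exploit uniformity and convergence. Both $Q$ (which depends only on the tileset) and $\beta$ (which depends only on $\mu$ and the maximal degree) are independent of $t$, so the right-hand side above is the same for every tile. Since $Q$ is a polynomial and $\beta<1$, the series $\sum_{m}Q(m)\beta^{m}$ converges, and therefore its tails tend to $0$; given $\epsilon$, I would pick $n$ large enough that $\sum_{m> n}Q(m)\beta^{m}\leq\epsilon$, which gives $\mu(\enclosed{t,> n})\leq\epsilon$ simultaneously for all $t\in T$, as required.

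The argument is essentially routine once Lemmas~\ref{lemma:counting} and~\ref{lem:expdecay} are in hand, so there is no serious obstacle; the only points deserving care are bookkeeping ones. I would make sure the double union is genuinely countable (each length contributes only finitely many gons by Lemma~\ref{lemma:counting}, and the lengths range over $\NN$), that the gons counted by Lemma~\ref{lemma:counting} are exactly the convex chain $2d$-gons appearing in the definition of $\enclosed{t}$ (so the counting bound applies without loss), and that no hidden $t$-dependence enters through the choice of starting tile in the counting lemma—indeed the count there is already phrased uniformly over $t$.
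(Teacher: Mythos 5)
Your proposal is correct and matches the paper's proof essentially verbatim: the paper also bounds $\enclosed{t,>n}$ by the event $\enclosed{t,>n}'$ of being enclosed in \emph{some} gon of length more than $n$ (your double union), applies the union bound with Lemma~\ref{lemma:counting} and Lemma~\ref{lem:expdecay} to get $\mu(\enclosed{t,>n})\leq\sum_{k>n}Q(k)\beta^k$ uniformly in $t$, and chooses $n$ so that this convergent tail is at most $\epsilon$. Your bookkeeping remarks (countability of the union, uniformity of $Q$ and $\beta$ in $t$) are sound and consistent with the paper's treatment.
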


\begin{proof}
  The proof is similar to the proof of Lemma \ref{lemma:uniform_bound} and relies on the fact that there are polynomially many convex chain $2d$-gons of length $n$.

  Denote $\enclosed{t,>n}'$ the set of configurations where tile $t$ is enclosed in a convex chain $2d$-gon of zeros of length more than $n$ (but possibly also in one of length less than $n$).

  By definition we have $\enclosed{t,>n} \subset \enclosed{t,>n}'$.

  Additionally $\enclosed{t,>n}'$ is the disjunction over all convex chain $2d$-gons of length more than $n$ of the event ``being all $0$'' so, denoting $\beta$ the constant from Lemma~\ref{lem:expdecay}, we have $\mu(\enclosed{t,>n}') \leq \sum\limits_{k>n} Q(k)\beta^k$.

  As there the series converge, there exists $n$ such that $\sum\limits_{k>n} Q(k)\beta^k<\epsilon$.

  We then have $\mu(\enclosed{t,>n})\leq \mu(\enclosed{t,>n}') \leq \epsilon$ as expected.
\end{proof}

\begin{lemma}
\label{lemma:mublockingzero}
  $\mu(\blocking) = 0$
\end{lemma}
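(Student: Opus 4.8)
The plan is to combine the three quantitative estimates established above—the uniform bound $\mu(\enclosed{t})\leq\lambda<1$ (Lemma~\ref{lemma:uniform_bound}), the uniform tail approximation (Lemma~\ref{lemma:uniform_approximation}), and the $k$-Markov property of $\mu$—to show that a configuration in $\blocking$ would have to enclose many widely separated tiles simultaneously, an event of vanishing probability. First I would record that $\blocking\subseteq\enclosed{t}$ for \emph{every} tile $t$: this is immediate from Lemma~\ref{lemma:blockingenclosed} when $x_t=1$ (the enclosing $2d$-gon exists) and holds trivially when $x_t=0$ (the trivial $2d$-gon), using the convention built into the definition of $\enclosed{t}$. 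Consequently, for any finite family of tiles $t_1,\dots,t_m$ we have $\mu(\blocking)\leq\mu\bigl(\bigcap_{i=1}^m\enclosed{t_i}\bigr)$, and it suffices to make the right-hand side arbitrarily small.

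The second step is to split each $\enclosed{t_i}$ into its head $\enclosed{t_i,\leq n}$ and tail $\enclosed{t_i,>n}$. Since $\enclosed{t_i}=\enclosed{t_i,\leq n}\cup\enclosed{t_i,>n}$, a configuration lying in $\bigcap_i\enclosed{t_i}$ either lies in $\bigcap_i\enclosed{t_i,\leq n}$ or else lies in $\enclosed{t_i,>n}$ for at least one $i$; hence
\[
\mu\Bigl(\bigcap_{i=1}^m\enclosed{t_i}\Bigr)\leq\mu\Bigl(\bigcap_{i=1}^m\enclosed{t_i,\leq n}\Bigr)+\sum_{i=1}^m\mu(\enclosed{t_i,>n}).
\]
The tail sum is controlled directly by Lemma~\ref{lemma:uniform_approximation}: given $\epsilon$, choose $n$ so that $\mu(\enclosed{t,>n})\leq\epsilon$ uniformly in $t$, making the second term at most $m\epsilon$.

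For the head term I would exploit locality together with the Markov property. By Lemma~\ref{lemma:counting} there are only finitely many convex chain $2d$-gons of length at most $n$ around a given tile, and (as in the proof of that lemma) each such gon consists of tiles lying within a bounded adjacency radius $r(n)$ of that tile; therefore the event $\enclosed{t_i,\leq n}$ is a finite union of cylinders supported on the ball $B(t_i,r(n))$, i.e.\ it is measurable with respect to the configuration on that ball. Choosing the $t_i$ pairwise at adjacency distance larger than $2r(n)+k$ makes these balls pairwise $k$-separated, and an induction on $m$ using the $k$-Markov property (taking $D=\bigcup_{i<m}B(t_i,r(n))$ and $X=B(t_m,r(n))$ at each step) shows that the restricted configurations, hence the events $\enclosed{t_i,\leq n}$, are jointly independent. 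Combining with $\enclosed{t_i,\leq n}\subseteq\enclosed{t_i}$ and Lemma~\ref{lemma:uniform_bound} gives $\mu\bigl(\bigcap_i\enclosed{t_i,\leq n}\bigr)=\prod_i\mu(\enclosed{t_i,\leq n})\leq\lambda^m$.

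Putting the pieces together yields $\mu(\blocking)\leq\lambda^m+m\epsilon$, and it remains to optimise the free parameters in the right order to avoid circularity. For an arbitrary target $\delta>0$ I would first fix $m$ with $\lambda^m<\delta/2$ (possible since $\lambda<1$), then set $\epsilon=\delta/(2m)$ and obtain the corresponding $n$ from Lemma~\ref{lemma:uniform_approximation}, and only afterwards select $m$ tiles pairwise separated by more than $2r(n)+k$—such tiles exist because the tiling is infinite with tiles of bounded diameter. This gives $\mu(\blocking)\leq\delta$ for every $\delta>0$, whence $\mu(\blocking)=0$. The main obstacle I anticipate is the locality/independence step: one must argue cleanly that $\enclosed{t_i,\leq n}$ depends only on a ball of controlled radius (so that the counting and Markov estimates apply) and that pairwise $k$-separation upgrades to \emph{genuine joint} independence of all $m$ events, rather than mere pairwise independence.
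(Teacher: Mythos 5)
Your proposal is correct and follows essentially the same route as the paper's own proof (what the authors call the \emph{polka dot technique}): the inclusion $\blocking\subseteq\bigcap_i\enclosed{t_i}$ via Lemma~\ref{lemma:blockingenclosed}, the head/tail split $\enclosed{t_i,\leq n}\cup\enclosed{t_i,>n}$ controlled by Lemmas~\ref{lemma:uniform_approximation} and~\ref{lemma:uniform_bound}, independence of the head events for $k$-separated tiles via the Markov property, and the resulting bound $\mu(\blocking)\leq\lambda^m+m\epsilon$. Your treatment is in fact slightly more careful than the paper's on the two points it glosses over—the locality radius $r(n)$ making $\enclosed{t_i,\leq n}$ measurable on a ball, and the induction upgrading pairwise to joint independence—so no gaps remain.
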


\begin{proof}
  We prove that for any $m\in \mathbb{N}$ and any $\epsilon > 0$, we have $\mu(B)\leq \lambda^m + m\epsilon$ with $\lambda<1$ of Lemma \ref{lemma:uniform_bound}.

  Take $\epsilon > 0$.
  Take $m\in\mathbb{N}$.
  By Lemma \ref{lemma:uniform_approximation}, there exists $n$ such that for any tile $t$,
  $\mu(\enclosed{t,>n}) \leq \epsilon$.

  Take $m$ tiles $t_0\dots t_{m-1}$ sufficiently far away from each other, in such a way that no two chain $2d$-gons of length $n$ enclosing two distinct tiles can be at distance $k$ or less, where $k$ is such that $\mu$ is $k$-Markov.
  This means that the events $\enclosed{t_i,\leq n}$ are independent and therefore we have $\mu\left(\bigcap\limits_{0\leq i <m} \enclosed{t_i,\leq n}\right) = \prod\limits_{0\leq i <m} \mu(\enclosed{t_i,\leq n})$.

  We have
  \[ \blocking \subseteq \bigcap\limits_{0\leq i < m} \enclosed{t_i} \]
  due to the fact that for any configuration $x$ in $B$, for any $i \in \{ 0, \dots, m-1 \}$, either $x_{t_i} = 1$ and thus $x$ belongs to $E_{t_i}$ due to Lemma~\ref{lemma:blockingenclosed}, or $x_{t_i}=0$ and $t_i$ is enclosed in the trivial 2d-gon of $0$s of itself.
  
  Therefore, we have:
  \[ \blocking \subseteq \bigcap\limits_{0\leq i < m} \enclosed{t_i} \subseteq \bigcap\limits_{0\leq i <m} \left( \enclosed{t_i,\leq n} \cup  \left(\enclosed{t_i} \setminus \enclosed{t_i,\leq n}\right)\right) \subseteq \bigcap\limits_{0\leq i <m} \enclosed{t_i, \leq n} \cup \bigcup\limits_{0\leq i <m} \enclosed{t_i,>n}\]

  From this we get:
  \begin{align*}
    \mu(\blocking) &\leq \mu(\bigcap\limits_{0\leq i <m} \enclosed{t_i,\leq n}) + \mu\left( \bigcup\limits_{0\leq i <m} \enclosed{t_i,>n}\right) \\
    &\leq \mu(\bigcap\limits_{0\leq i <m} \enclosed{t_i,\leq n}) + \sum\limits_{0\leq i < m}\mu(\enclosed{t_i,>n})\\
    &\leq \mu(\bigcap\limits_{0\leq i <m} \enclosed{t_i,\leq n}) + m\epsilon\\
    &\leq \prod\limits_{0\leq i <m} \mu(\enclosed{t_i,\leq n}) + m\epsilon\\
    &\leq \prod\limits_{0\leq i < m} \mu(\enclosed{t_i}) + m\epsilon\\
    &\leq \lambda^m + m\epsilon
  \end{align*}

  Since this holds for any $m\in \mathbb{N}$ and any $\epsilon>0$, we get $\mu(\blocking)=0$.
\end{proof}

We call the proof technique above the \emph{polka dot technique}, since the $\enclosed{t_i,\leq n}$s focus on the surroundings of tiles far away from each other and therefore with independant behavior, just as polkka dots on our tiling.

\begin{theorem}[2-neighbour percolation on rhombus tilings]
  Let $\tiling$ be a rhombus tilings with finitely many edge directions.
  Let $F$ be the $2$-neighbour contamination cellular automaton on configurations $\{0,1\}^\tiling$.
  Let $I$ be the invasion set, that is $I:=\{ c \in \{0,1\}^\tiling \mid F^n(c) \to 1^\tiling\}$.
  Let $\mu$ be a \MNVPC{} measure on $\{0,1\}^\tiling$.
  We have \[ \mu(\invasion) = 1.\]
\end{theorem}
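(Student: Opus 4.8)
The plan is to prove the theorem by decomposing the complement of the invasion set $\invasion$ into two pieces whose measures we have already controlled, and then invoking the earlier lemmas. The key observation is that a configuration either invades, or it does not; and if it does not invade, then its limit configuration $c^\infty$ under the freezing monotone dynamics $F$ is stable and not equal to $1^\tiling$. The whole geometric analysis of Section~\ref{sec:geometrical_elements} tells us exactly what such a stable non-invading configuration can look like: every infected cluster is either finite and enclosed in a chain $2d$-gon of zeros, or infinite and bounded by an infinite half-chain of zeros (Corollary~\ref{corollary:bootstrap_stability}).

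First I would write the disjoint-looking decomposition
\[
  \invasion^\complement = \bigl(\invasion^\complement \cap \chainzero\bigr) \cup \bigl(\invasion^\complement \cap \chainzero^\complement\bigr),
\]
where $\chainzero$ is the set of configurations containing at least one half-chain of zeros. The second term is exactly the set $\blocking$ defined before Lemma~\ref{lemma:blockingenclosed}. Then the two halves are dispatched by the two main probabilistic results already established: the first term is contained in $\chainzero$, and Lemma~\ref{lemma:muchainzero} gives $\mu(\chainzero) = 0$, so $\mu(\invasion^\complement \cap \chainzero) = 0$ by monotonicity of the measure. The second term is $\blocking$, and Lemma~\ref{lemma:mublockingzero} gives $\mu(\blocking) = 0$ directly.

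Combining these with subadditivity of $\mu$ yields
\[
  \mu(\invasion^\complement) \leq \mu(\chainzero) + \mu(\blocking) = 0 + 0 = 0,
\]
and therefore $\mu(\invasion) = 1$, as claimed. I would make sure to note that this decomposition is exhaustive: any non-invading configuration either contains a half-chain of zeros (and lies in the first piece) or does not (and lies in the second, which is $\blocking$ by definition). The measurability of $\invasion$ was already remarked (it is a $G_\delta$ set), and $\chainzero$ and $\blocking$ are countable unions/intersections of cylinder events, so all quantities are well defined Borel measures.

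Since essentially all the work has been front-loaded into Lemmas~\ref{lemma:muchainzero} and~\ref{lemma:mublockingzero}, the theorem itself is a short assembly step and I do not expect any genuine obstacle here. The only thing to be careful about is that the decomposition covers $\invasion^\complement$ exactly and that $\blocking = \invasion^\complement \cap \chainzero^\complement$ matches the definition used in Lemma~\ref{lemma:mublockingzero}; this is immediate from the definitions. The genuine difficulty of the paper lies upstream—in establishing chain-convexity of stable clusters, the absence of fortresses, the $2d$-gon structure of cluster boundaries, and the polka-dot independence argument that produces the bound $\mu(\blocking) \leq \lambda^m + m\epsilon$—all of which are invoked here as black boxes.
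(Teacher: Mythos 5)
Your proposal is correct and follows essentially the same route as the paper: the paper's proof also decomposes $\invasion^\complement$ via $\chainzero$, writing $\mu(\invasion^\complement) = \mu(\invasion^\complement \cap \chainzero^\complement) = \mu(\blocking)$ and concluding from Lemmas~\ref{lemma:muchainzero} and~\ref{lemma:mublockingzero}. Your phrasing via subadditivity rather than $\mu(\chainzero^\complement)=1$ is an immaterial variation, and your attention to exhaustiveness and measurability is sound.
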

Recall in particular that non trivial Bernoulli measures are \MNVPC{} so this theorem holds for non trivial Bernoulli measures.

\begin{proof}
  Since $\mu(\chainzero)=0$ by Lemma~\ref{lemma:muchainzero}, we have $\mu(\chainzero^\complement)=1$. Therefore $\mu(\invasion^\complement) = \mu(\invasion^\complement \cap \chainzero^\complement) = \mu(\blocking)$.
  By Lemma \ref{lemma:mublockingzero}, $\mu(\blocking)=0$ so $\mu(\invasion^\complement) = 0$ and $\mu(\invasion)=1$.
\end{proof}

\section{Open questions}
This work opens more questions than it closes. We classify these open questions into three classes.

As mentioned in Section \ref{sec:bootstrap-intro}, the classical $0-1$ laws do not apply for dynamical or bootstrap percolation on rhombus tilings.
In all generality, for any rhombus tilings and any percolation process, the $0-1$ law does not hold, see for example Appendix \ref{appendix:perco-rhombus}.
However, we conjecture that if the rhombus tiling is sufficiently regular then the $0-1$ law holds for the invasion event of percolation processes.

\begin{conjecture}[$0-1$ law for uniformly repetitive rhombus tilings]
  Let $G=(V,E)$ be the adjacency graph of an uniformly repetitive (or uniformly recurrent) rhombus tiling.
  Let $F$ be a percolation process (monotone and freezing) on $\{0,1\}^V$.
  Let $I\subset \{0,1\}^V$ be the set of invading configurations for $F$.
  Let $\mu$ be a Bernoulli measure.
  \[\mu(I)\in\{0,1\}.\]
\end{conjecture}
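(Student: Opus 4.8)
The plan is to reduce the desired 0-1 law to an ergodicity statement on the tiling hull, and then to transfer it back from \emph{almost every} tiling to the specific (and in fact every) tiling. Concretely, let $\Omega$ be the hull of $\tiling$, i.e.\ the closure in the local topology of the set of all translates of $\tiling$, equipped with the natural $\RR^2$-action by translation. Uniform recurrence is exactly minimality of this action; by compactness of $\Omega$ there is at least one ergodic translation-invariant probability measure $\nu$. Over $\Omega$ I would build the skew product $\hat\Omega=\{(\omega,x)\mid \omega\in\Omega,\ x\in\{0,1\}^{V(\omega)}\}$ of tilings together with a configuration on their tiles, carrying the diagonal $\RR^2$-action that translates the tiling and the configuration simultaneously, and the invariant measure $\hat\mu$ obtained by coupling $\nu$ with the Bernoulli measure $\mu$ on each fibre. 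Since $\mu$ is translation invariant and $F$ commutes with translations, the invasion event $\hat{\invasion}\subseteq\hat\Omega$ is $\RR^2$-invariant.

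First I would verify that the $\RR^2$-action on $(\hat\Omega,\hat\mu)$ is ergodic; this is where the two sources of randomness interact, the minimality of the base $(\Omega,\nu)$ having to be combined with the translation-mixing of the Bernoulli fibres, which I expect to follow from a standard relative-ergodicity argument. Granting this, the invariant event satisfies $\hat\mu(\hat{\invasion})\in\{0,1\}$. Moreover the function $g(\omega):=\mu(\invasion_\omega)$, where $\invasion_\omega$ is the invasion event in the graph $G_\omega$, is translation invariant on $\Omega$ because $F$ and $\mu$ are; hence by ergodicity of $(\Omega,\nu)$ it is $\nu$-almost everywhere equal to a constant, and integrating against the disintegration of $\hat\mu$ forces that constant to be $\hat\mu(\hat{\invasion})\in\{0,1\}$.

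The genuinely hard step, and the reason this is only a conjecture, is to pass from this almost-everywhere statement to the value at the single tiling $\tiling$, which is typically a $\nu$-null point of $\Omega$; this is precisely the gap between our single-tiling viewpoint and the hull-averaging approach of \cite{hof1998}. The $\nu$-a.e.\ constant function $g$ need not be continuous, so minimality alone does not force $g(\tiling)=\hat\mu(\hat{\invasion})$. I would attack this by trying to establish a semicontinuity property of $g$: invasion is a countable intersection of the monotone events ``tile $v$ is eventually infected'', and as $\omega_n\to\omega$ locally the finite-range dynamics converge on larger and larger balls, which gives hope that $g$ is, say, lower semicontinuous in the local topology. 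Combined with minimality (every orbit is dense) and the known a.e.\ value, such semicontinuity would be the natural tool to pin $g$ down on a dense invariant set and ultimately on $\tiling$ itself.

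An alternative, more self-contained route that avoids the hull is to imitate directly the \emph{polka dot technique} used for the main theorem: uniform recurrence guarantees that every local environment of radius $r$ recurs with gap bounded by some $R(r)$, so one can select arbitrarily many far-apart tiles whose neighbourhoods are pairwise isometric up to radius $r$. On these quasi-copies the Bernoulli labels are independent, which should yield an approximate self-independence $\mu(\invasion\cap\invasion)\approx\mu(\invasion)^2$ after localizing invasion to bounded windows and controlling the error through the recurrence gaps, again forcing $\mu(\invasion)\in\{0,1\}$. The main obstacle is the same in disguise: recurrence furnishes only approximate, not exact, symmetries of the graph, so one must show the localization error vanishes uniformly, which seems to demand a quantitative form of repetitivity (such as linear repetitivity) rather than mere uniform recurrence.
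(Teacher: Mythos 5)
You are attempting to prove Conjecture~1 of the paper: the authors state it as an \emph{open problem} and provide no proof, so there is nothing of record to compare your argument against --- it must stand on its own, and, as you yourself flag, it does not. The hull reduction (skew product $\hat\Omega$ over the minimal system $(\Omega,\nu)$, ergodicity of $\hat\mu$, hence $\hat\mu(\hat{I})\in\{0,1\}$ and $g(\omega)=\mu(I_\omega)$ constant $\nu$-a.e.) is a sensible programme in the spirit of \cite{hof1998}, and the ergodicity step, though only asserted, is plausible because the Bernoulli fibres are relatively mixing over the base. The genuine gap is exactly where you locate it: transferring the a.e.\ value to the given tiling, which is typically $\nu$-null. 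But your proposed repair via semicontinuity is weaker than you suggest. Writing $I_\omega=\bigcap_{v}\bigcup_{n}\{c \mid F^n(c)_v=1\}$, each event ``$v$ eventually infected'' has probability expressible as a supremum of cylinder probabilities (which behaves lower-semicontinuously as $\omega$ varies locally), but the intersection over infinitely many $v$ takes an infimum over exhausting vertex sets, and the two limits do not commute; $g$ has no evident semicontinuity in either direction. Moreover, even granting lower semicontinuity, density of the orbit of $\tiling$ together with translation invariance of $g$ only yields $g(\tiling)\geq c$ where $c$ is the a.e.\ value: this settles the case $c=1$ but says nothing when $c=0$, which would require the opposite (upper) semicontinuity --- so the programme as stated cannot close even in principle without a new idea for one of the two cases.

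A second point your proposal does not engage with is \emph{how} the uniform recurrence hypothesis must enter, which the paper's own Appendix~\ref{appendix:perco-rhombus} illuminates: there the $0$-$1$ law fails for a non-uniformly-recurrent rhombus tiling precisely because a blocking pattern (the ``cube'' fortress for the partially directed process $F_3$) occurs exactly once in the tiling. Under uniform recurrence any finite pattern that occurs at all recurs syndetically, so a fortress, if one exists, recurs infinitely often and almost surely some copy is not disarmed, forcing $\mu(I)=0$ in that branch; a correct proof would likely have to run through a dichotomy of this kind, and also handle general monotone freezing $F$ whose blocking structures need not be finite fortresses at all. Neither your hull argument (where minimality is used only through orbit density) nor your polka-dot sketch currently exploits recurrence at this structural level, and your own closing diagnosis --- that mere uniform recurrence furnishes only approximate graph symmetries and that one may need a quantitative strengthening such as linear repetitivity --- is an accurate description of why the statement remains a conjecture rather than a theorem.
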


On $\mathbb{Z}^2$, percolation processes have been studied in great generality, and though the exact critical threshold can often not be determined exactly, a trichotomy between trivial percolation threshold ($p_c\in \{0,1\}$) and non-trivial percolation threshold ($p_c>0$) can be determined from the stable directions of the percolation process~\cite{bollobas2015,balister2016,bollobas2023}.
We similarly conjecture that the non-triviality of the critical threshold on sufficiently regular rhombus tilings is determined by the stable directions.
We consider here the class of multigrid dual tilings \cite{debruijn1981,debruijn1986}, which are the canonical case of cut-and-project rhombus tilings: those are very regular, and in particular chains of rhombuses are almost straight (each chain of rhombuses is included in a tube of direction $\vec{e}^\bot$ where $\vec{e}$ is the edge direction of the chain). In particular the classical Penrose rhombus tilings and Ammann-Beenker rhombus tilings are multigrid dual tilings.

\begin{question}[Stable directions and critical probability on rhombus tilings]
  Let $\tiling$ be a multigrid dual tiling and $F$ a percolation process on $\{0,1\}^\tiling$.
  Do the stable directions of $F$ on $\tiling$ determine the trichotomy between trivial percolations thresholds and non-trivial percolation threshold?
\end{question}

The main result of the present article is that 2-neighbour percolation invades almost surely for any \MNVPC{} or non-trivial Bernoulli measure on any rhombus tilings, or equivalently on the adjacency graph of any rhombus tiling.
The first key element of the proof is the absence of fortress, the second is a counting argument on the possible finite walls of $0$s by bounding the number of wall directions.
It can easily be seen that for arbitrary $4$-regular graphs, even planar graph, if the graph has exponential growth (for example the Cayley graph of the free group on 2 generators) then the counting argument fails.
However, if the graph is quasi-isometric to $\mathbb{Z}^2$, the counting argument may hold without the strong structure of rhombus tilings.

\begin{question}[Almost sure percolation for 2-neighbour percolation on 4-regular graphs QI to $\ZZ^2$ without fortress]
  Let $G=(V,E)$ be a 4-regular graph that is quasi-isometric to $\mathbb{Z}^2$.
  Let $I\subset \{0,1\}^V$ be the set of invading configurations for 2-neighbour percolation on $G$.
  Let $\mu$ be a non trivial Bernoulli measure on $\{0,1\}^V$.
  
  If $G$ contains no fortress (finite obstacle) for 2-neighbour contamination, is it the case that $\mu(I)=1$?
\end{question}

2-neighbour percolation is very classicaly studied on $\mathbb{Z}^2$ and has been studied on trees and non-amenable groups \cite{balogh2006}. We aim to generalise this approach to other 2-generator groups such as Baumslagg Solitar groups.
\begin{question}[Cayley graph]
  Can we classify 2-generators finitely presented groups with regards to the critical probability of 2-neighbour percolation?
\end{question}

An auxiliary question that arose from our first exploration of percolation on Cayley graph is related to mixed degree graphs.
Consider for example $G$ a single sheet of the Cayley graph of $BS(1,2)$.
$G$ is planar and contains vertices of degree $4$ and $3$, each vertex of degree $3$ being adjacent to only vertices of degree $4$.
In a $2$-neighbour percolation on $G$, vertices of degree 3 are ``hard'' to contaminate as they require 2 out of 3 neighbours to be contaminated.
It appears that $G$ has a critical probability of percolation of $1$, that is for any non-trivial Bernoulli measure $\mu$, $\mu(I)=0$.
This is due to the uncountable number of vertical chains (each chain branching in two at each tile), linked to the fact that $G$ is not quasi-isometric to $\mathbb{Z}^2$.

However we may ask a similar question as above for mixed degree graphs that are quasi-isometric to $\ZZ^2$.

\begin{question}[Mixed degree]
  Let $G=(V,E)$ a mixed 3-4 degree graph quasi isometric to $\mathbb{Z}^2$ and such that degree 3 vertices are only adjacent to degree $4$ vertices.
  Let $I\subset \{0,1\}^V$ be the set of invading configurations for 2-neighbour percolation on $G$.
  Let $\mu$ be a non trivial Bernoulli measure on $\{0,1\}^V$.
  
  If $G$ contains no fortress (finite obstacle) for 2-neighbour contamination, is it the case that $\mu(I)=1$?
\end{question}

Note also that for the example of $G$ being a single sheet of the Cayley graph of $BS(1,2)$, there exists a second approach which consists in studying percolation not on vertices but on elementary cycles, or equivalently on the dual graph $G^*$.
In other words, we can consider $G$ either as a graph on which we study percolation, or as a tiling by elementary cycles for which we study the percolation on the adjacency or dual graph. 

\appendix
\section{The non $0-1$ percolation on a quadrilater tiling}
\label{appendix:nonzeroone}
Let $P_f$ be the fortress consisting of $4$ trapezes and a small square forming a big square presented in Fig.~\ref{fig:fortress}.
Let $\tiling_f$ be the square grid where the origin square has been replaced by $P_f$.

For a configuration $c\in \{0,1\}^{\tiling_f}$, we write $c|_{P_f}=0$ (resp. $c|_{P_f}=1$) when all the tiles of the fortress $P_f$ are in state $0$ (resp. $1$), $c|_{P_f}\notin\{0,1\}$ when at least a fortress tile is in state $0$ and at least one is in state $1$. For other tiles we denote $c(i,j)$ with $(i,j)\neq (0,0)$ for the state of the $(i,j)$ tile in the induced grid.

We write $F$ the cellular automaton of 2-neighbour contamination on both $\tiling_f$ and $\ZZ^2$.

\begin{proposition}
  \label{prop:bootstrapnonzeroone}
  Let $\mathfrak{F}$ be the set of configurations in $\{0,1\}^{\tiling_f}$where at least one tile of the fortress $P_f$ is in state $1$.
  Let $\invasion$ be the set of invading configurations.
  Let $\mu$ be the Bernoulli measure of parameter $p$.

  \[ \mu(I) = \mu(\mathfrak{F}) = 1-(1-p)^5\]
\end{proposition}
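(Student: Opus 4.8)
The plan is to prove the two equalities separately; the right-hand one is immediate and the identity $\mu(\invasion)=\mu(\mathfrak F)$ carries all the content. The fortress $P_f$ consists of exactly five tiles (four trapezes and the central small square), so the complementary event $\mathfrak F^\complement$ is precisely the cylinder $[0^{P_f}]$, and since $\mu$ is a product measure, $\mu(\mathfrak F^\complement)=(1-p)^5$, giving $\mu(\mathfrak F)=1-(1-p)^5$. For the inclusion $\invasion\subseteq\mathfrak F$ I would invoke directly that $P_f$ is a fortress: if $c\in\mathfrak F^\complement$ then every tile of $P_f$ is $0$, and by definition of a fortress the configuration equal to $0$ on $P_f$ and $1$ elsewhere is stable for $F$; by monotonicity the tiles of $P_f$ are then never infected from any initial condition agreeing with $c$ on $P_f$, so $c\notin\invasion$. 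Hence $\invasion\subseteq\mathfrak F$ and $\mu(\invasion)\le\mu(\mathfrak F)$.

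The reverse inequality amounts to $\mu(\mathfrak F\setminus\invasion)=0$. Let $J$ be the event that every grid tile of $\tiling_f$ (every tile outside $P_f$) is eventually infected. I would first record the local fact that once the four grid tiles adjacent to $P_f$ are infected and at least one tile of $P_f$ is infected, the whole fortress fills: if the central square is $1$, each trapeze then sees its outside neighbour and the centre, hence two $1$s; if instead a trapeze is $1$, the two trapezes adjacent to it each see their outside neighbour and this trapeze and get infected, after which the last trapeze and finally the centre follow. Applying this to the stable limit configuration $c^\infty$ — which is a fixed point of $F$ since the graph is locally finite — shows that on $J\cap\mathfrak F$ one necessarily has $c^\infty=1^{\tiling_f}$ (otherwise some $0$-tile of $P_f$ would have two infected neighbours, contradicting fixedness), i.e. $J\cap\mathfrak F\subseteq\invasion$. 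Consequently $\mathfrak F\setminus\invasion\subseteq J^\complement$, and it suffices to prove $\mu(J)=1$.

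The main obstacle is precisely this last claim, since $\tiling_f$ is not a rhombus tiling and carries a genuine defect at the origin, so the main theorem does not apply to it directly. I would reduce to $\ZZ^2$ as follows. By monotonicity of $F$ it is enough to treat the least favourable case where all tiles of $P_f$ are $0$; then $P_f$ stays $0$ forever, the four trapeze-neighbours of the grid cells remain permanently $0$, and the grid tiles evolve exactly as $2$-neighbour bootstrap on $H:=\ZZ^2\setminus\{(0,0)\}$, where the four cells formerly adjacent to the origin now have only three grid-neighbours. It then remains to show that $2$-bootstrap on $H$ invades $\mu_p$-almost surely for every $p>0$.

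This last point I would obtain from the $\ZZ^2$ invasion guaranteed by van Enter's theorem (equivalently, by the main theorem applied to the square grid) together with the observation that a single deleted vertex is a bounded defect. The standard internal-spanning estimate produces, almost surely, arbitrarily large fully infected regions appearing far from the origin in every quadrant; these surround and absorb the origin's neighbourhood, so that the bulk cells $(0,\pm2)$, $(\pm2,0)$ and $(\pm1,\pm1)$ surrounding the four degree-three cells become infected. Each degree-three cell then receives two infected grid-neighbours and is infected in turn, after which all of $H$ fills in, giving $\mu(J)=1$. Combining the three steps yields $\mu(\mathfrak F\setminus\invasion)=0$, hence $\mu(\invasion)=\mu(\mathfrak F)=1-(1-p)^5$.
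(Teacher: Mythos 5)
Your framework is sound and, up to the last step, matches the paper's outline: the identity $\mu(\mathfrak F)=1-(1-p)^5$ is immediate since $P_f$ has five tiles, the inclusion $\invasion\subseteq\mathfrak F$ follows from the fortress property exactly as you argue, and your local ``disarmed fortress fills once surrounded'' computation is correct (the paper only states this in passing, so your case analysis on the limit configuration is a welcome addition). Your reduction of the remaining inequality to the statement that $2$-bootstrap on $H=\ZZ^2\setminus\{(0,0)\}$ invades almost surely is also exactly the paper's reduction (its Proposition~\ref{prop:bootstrapwithhole}).

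The genuine gap is in your proof of that last statement, which is where all the difficulty lives. You derive it from ``the standard internal-spanning estimate produces, almost surely, arbitrarily large fully infected regions \dots{} these surround and absorb the origin's neighbourhood.'' But producing, for every $n$, almost surely some internally spanned $n\times n$ square does not yield almost sure invasion: each fixed spanned square fails to grow forever with \emph{positive} probability, and on $\ZZ^2$ one closes this gap with the translation-invariance $0$--$1$ law (Remark~\ref{rem:ergodicity}) --- van Enter's theorem is ergodicity plus a positive-probability growth estimate. The graph $H$ has only a finite automorphism group (the stabiliser of the origin), so no such $0$--$1$ law is available; indeed the whole point of this appendix is that on modified grids the invasion event need not have probability in $\{0,1\}$, so one cannot ``invoke van Enter'' off the shelf. (Note also that almost sure invasion of $\ZZ^2$ for the extended configuration does not transfer to $H$, since the $\ZZ^2$ dynamics may use the origin's infection to propagate.) This is precisely the hole the paper fills by redoing the cluster geometry on $G_\circ$ --- Lemma~\ref{lemma:technical_stablechain} still holds, stable clusters are enclosed in rectangles or L-shaped hexagons, the number of walls of length $n$ is polynomial --- and then running the polka dot technique of Lemma~\ref{lemma:mublockingzero}, which substitutes uniform enclosure bounds and Markov quasi-independence of distant events for the missing ergodicity. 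Your deterministic engulfing step (a huge infected square passes around the hole, infecting both sub-segments of the blocked column and then the four degree-three cells) is correct, and a short rigorous repair along your lines does exist: apply the growth argument inside each of the four closed half-planes avoiding the origin, whose graphs carry a shift with an infinite orbit so that the $0$--$1$ law of Remark~\ref{rem:ergodicity} does apply, and note that the $H$-dynamics dominates each half-plane's internal dynamics while the four half-planes cover $H$. As written, however, the key probabilistic step is asserted rather than proved.
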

We give here an outline of the proof which is adapted from the strategy on rhombus tilings detailed in Section \ref{sec:mainresult}.

Denote $G_\circ$ the square grid with a single hole on the origin, that is $G_\circ := \mathbb{Z}^2\setminus \{(0,0)\}$.
Denote $F$ the $2$-neighbour contamination also on $G_\circ$, and $\invasion_\circ$ the set of invading configurations on $\{0,1\}^{G_\circ}$.

\begin{proposition}[Bootstrap percolation on the square grid with a hole]
	\label{prop:bootstrapwithhole}
  For any non-trivial Bernoulli measure $\mu$ on $\{0,1\}^{G_\circ}$, $\mu(I_\circ)=1$.
\end{proposition}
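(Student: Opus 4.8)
The plan is to reduce the statement to van Enter's theorem on $\ZZ^2$ \cite{vanenter1987} by treating the missing origin as a permanently uninfected site. First I would record the elementary dynamical identity underlying everything: for $c\in\{0,1\}^{G_\circ}$, let $\tilde c\in\{0,1\}^{\ZZ^2}$ be its extension with $\tilde c(0,0)=0$, and let $\bar F$ be the $2$-neighbour automaton on $\ZZ^2$ in which the origin is held frozen at $0$. Since a neighbour in state $0$ never contributes to an infection, the $G_\circ$-dynamics on $c$ and the process $\bar F$ on $\tilde c$ agree on every non-origin tile at every step. Hence $c\in I_\circ$ if and only if $\bar F^\infty(\tilde c)$ equals $1$ on all of $\ZZ^2\setminus\{(0,0)\}$, and it suffices to prove this holds $\mu_p$-almost surely. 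As a sanity check and a source of intuition, van Enter's result already gives the right heuristic: from $\mu_p(\invasion)=1$ on $\ZZ^2$ and the decomposition $1=(1-p)\,\mu_p(\invasion\mid x_{(0,0)}=0)+p\,\mu_p(\invasion\mid x_{(0,0)}=1)$ one gets $\mu_p(\invasion\mid x_{(0,0)}=0)=1$, so the \emph{unfrozen} automaton started from $\tilde c$ invades the whole plane a.s.; the only question is whether \emph{freezing} the single defect can save a blocked cluster.

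The cleanest self-contained route, and the one I would actually write, is to transport the machinery of Section~\ref{sec:mainresult} directly to $G_\circ$. The square tiling has $d=2$ edge directions, so the relevant objects are chain $4$-gons, and the only feature distinguishing $G_\circ$ from $\ZZ^2$ is that the row and column through the origin are each split into two half-chains by the hole. Treating the hole as a permanent $0$, the entire probabilistic skeleton applies verbatim: the Bernoulli measure is still \MNVPC{}, the degree bound and the number of chain directions are unchanged, so Lemma~\ref{lem:expdecay} (exponential cost of a patch of $0$s), Lemma~\ref{lemma:muchainzero} (nullity of $\chainzero$), the polynomial count of enclosing $4$-gons (Lemma~\ref{lemma:counting}), the uniform bound (Lemma~\ref{lemma:uniform_bound}), and finally the polka-dot argument of Lemma~\ref{lemma:mublockingzero} all go through. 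The geometric input—that a finite blocked cluster is enclosed in a chain $4$-gon of $0$s and an infinite one forces a half-chain of $0$s (Corollary~\ref{corollary:bootstrap_stability})—also survives, since it is a local statement about chain-convex stable sets and the hole only contributes a ``free'' $0$ to any enclosure. Combining these exactly as in the proof of the main theorem yields $\mu(\blocking)=0$ and $\mu(\chainzero)=0$, hence $\mu(I_\circ)=1$.

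The one genuinely new point, which I expect to be the main obstacle, is the analogue of Lemma~\ref{lemma:no-fortress}: that $G_\circ$ contains no finite obstacle for $2$-neighbour contamination. I would argue by a dichotomy on a finite blocked region $R$. Either $R$ encloses the hole, in which case $R\cup\{(0,0)\}$, all set to $0$ with everything else $1$, is stable for the \emph{ordinary} $\ZZ^2$-dynamics (the origin sees only $0$s) and is therefore a genuine fortress of $\ZZ^2$, excluded by Lemma~\ref{lemma:no-fortress}; or $R$ does not enclose the hole, in which case the hole is irrelevant to its stability and one can find a chain meeting $R$ and avoiding the origin, so the chain-crossing argument of Lemma~\ref{lemma:no-fortress} (via the technical Lemma~\ref{lemma:technical_stablechain}) applies unchanged on $G_\circ$. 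The delicate bookkeeping is the intermediate situation where $R$ partially surrounds the hole; here I would still push every chain entering $R$ through the technical lemma, using that the broken chains through the origin are each covered by their two half-chains.

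Alternatively, if one prefers to lean on van Enter rather than re-run the counting, the reduction of the first paragraph lets me couple $\bar F$ and the ordinary $F$ on the same $\tilde c$: they coincide on every tile up to the a.s.\ finite first time $T$ at which the origin is infected under $F$, and at time $T$ at least two neighbours of the origin are already infected under $\bar F$. The whole problem then collapses to showing that, almost surely, \emph{all four} neighbours of the origin eventually become infected under $\bar F$; once they are, the frozen origin borders only infected tiles, so $\bar F$ thereafter coincides with $F$ on non-origin tiles started from a configuration dominating $\tilde c$, and monotonicity together with $\mu_p(\invasion\mid x_{(0,0)}=0)=1$ finishes the argument. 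This shortcut trades the global no-fortress verification for the local claim about the four neighbours, which is again the assertion—the real crux—that a single missing vertex, unlike the trapezoidal fortress $P_f$, is too small to stop the contamination.
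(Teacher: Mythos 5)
Your overall strategy is the same as the paper's: the paper also proves this proposition by transporting the Section~\ref{sec:mainresult} machinery to $G_\circ$ (null measure for half-chains of $0$s, polynomial wall counting, uniform bound, polka-dot argument), so your second paragraph is on the right track, and your alternative van Enter reduction in the last paragraph is, as you yourself concede, not a proof — it reduces the statement to the unproven claim that all four neighbours of the origin a.s.\ get infected with the origin frozen, which is the whole difficulty restated.

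The genuine gap is in your claim that the geometric input ``survives verbatim'' with chain $4$-gons because the hole ``only contributes a free $0$ to any enclosure.'' This is false, and it is exactly where the paper has to do new work. In $G_\circ$ the row and the column through the origin are each broken into two half-chains, and a stable cluster can exploit this: the paper's regular/singular dichotomy shows that a \emph{singular} stable cluster (one containing exactly two of the four tiles adjacent to the hole, in perpendicular directions) is the union of two rectangles meeting at the hole, i.e.\ an L-shape whose inner corner is the missing origin. Such a cluster is \emph{not} chain-convex with respect to full $\ZZ^2$-chains (the line through the origin meets it in two intervals separated by the hole), and its enclosing wall of $0$s is an L-shaped hexagon — six chain segments, not the $2d=4$ your proposal asserts. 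So the lemma you invoke as surviving unchanged (the analogue of Corollary~\ref{corollary:bootstrap_stability}) is wrong as you state it, and a referee would reject the ``verbatim'' step. The gap is repairable, and this is what the paper does: decompose a stable $P\subseteq G_\circ$ into $\tilde{P}$ (regular case) or $\tilde{P}_1\cup\tilde{P}_2$ (singular case), each stable in $\ZZ^2$, conclude that finite clusters are enclosed in rectangles \emph{or} L-shaped hexagons, and observe that a bounded number of sides ($6$ instead of $4$) still yields a polynomial count $Q(n)$, after which your probabilistic paragraph goes through unchanged. Your no-fortress dichotomy (wall enclosing the hole versus not) is in the right spirit — the paper gets the absence of finite obstacles from the infinite-cluster half-chain corollary rather than a separate case analysis — but note that the ``delicate bookkeeping'' you defer there is the same singular phenomenon you overlooked in the enclosure classification, so it cannot be postponed: it changes the shape of the walls you must count, not merely the proof of one auxiliary lemma.
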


In what follows, we sketch the proof that the behavior of the square grid with a hole can be assimilated to the one of the full square grid $\ZZ^2$, and reuse our results on rhombus tilings to prove the almost-sure invasion.

Given a stable connected subset $P$ of $G_\circ$, we denote $\tilde{P}$ its induced $\mathbb{Z}^2$ subset by $\tilde{P}:= P\cup \{(0,0)\}$ if $P$ contains at least three of $\{(1,0), (-1,0),(0,1), (0,-1)\}$, and $\tilde{P}:= P$ otherwise.

We say that $P$ is regular if $|P\cap \{(1,0), (-1,0), (0,1), (0,-1)\}| \neq 2$ and singular otherwise.

If $P$ is singular, then exactly two of $\{(1,0), (-1,0), (0,1), (0,-1)\}$ are in $P$. Denote $H_1$ and $H_2$ the two half-planes among $\{(i,j)| i>0\}, \{(i,j) | i<0\}, \{(i,j) | j>0\}, \{(i,j) | j<0\}$ containing this two points.
Then denote $\tilde{P_1}:= P \cap H_1$ and $\tilde{P_2}:= P\cap H_2$.  Remark that because outside of position $(0,0)$, $\tilde{P}$ is stable for 2-neighbour contamination in $\mathbb{Z}^2$, we obtain that  $\tilde{P}=\tilde{P_1}\cup \tilde{P_2}$.

\begin{lemma}[Geometry and combinatorics of stable contaminated clusters on $G_\circ$]
  Let $P$ be a connected subset of $G_\circ$ that is stable for 2-neighbour contamination.

  If $P$ is regular, then $\tilde{P}$ is stable for $2$-neighbour contamination in $\mathbb{Z}^2$.

  If $P$ is singular, then both $\tilde{P}_1$ and $\tilde{P}_2$ are stable for $2$-neighbour contamination in $\mathbb{Z}^2$.
\end{lemma}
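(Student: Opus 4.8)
The plan is to prove the lemma by a local analysis confined to the five tiles $(0,0),(\pm 1,0),(0,\pm 1)$. The graphs $G_\circ$ and $\mathbb{Z}^2$ differ only in that $\mathbb{Z}^2$ carries the extra vertex $(0,0)$ together with the four edges joining it to $(\pm 1,0)$ and $(0,\pm 1)$. Hence for every tile that is neither the origin nor one of its four neighbours, the neighbourhoods in the two graphs coincide, and its count of neighbours in $\tilde P$ equals its count in $P$. The only tiles whose neighbour count can change when passing from $P$ in $G_\circ$ to $\tilde P$ in $\mathbb{Z}^2$ are therefore the origin itself and those of its four neighbours that are not in $P$. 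Any instability of $\tilde P$ away from this neighbourhood is already an instability of $P$ in $G_\circ$, so the whole argument reduces to controlling these few tiles.

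First I would dispatch the regular case with at most one origin-neighbour in $P$, where $\tilde P=P$. The origin is not added, so no tile gains a neighbour in $\tilde P$; and the origin, as a vertex of $\mathbb{Z}^2$ outside $\tilde P$, has at most one neighbour in $P$ by hypothesis, so it is not contaminated. Stability thus transfers verbatim. In the regular case with three or four origin-neighbours in $P$ we set $\tilde P=P\cup\{(0,0)\}$; now the origin lies in $\tilde P$ and is no longer a candidate for contamination, and the only tiles to examine are the origin-neighbours missing from $P$. If all four are present there is nothing to prove. If exactly one is missing, say $(1,0)$, then in $\tilde P$ it already has the origin as a neighbour, so I would need to rule out any of $(2,0),(1,1),(1,-1)$ lying in $P$. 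The singular case $|P\cap\{(\pm1,0),(0,\pm1)\}|=2$ is treated in parallel: one first checks $P=\tilde P_1\sqcup\tilde P_2$, and then that each half-plane piece $\tilde P_i=P\cap H_i$ is stable in $\mathbb{Z}^2$, the only new candidates for contamination again being the two present origin-neighbours, read off in their respective half-planes.

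The hard part will be exactly this last verification. Stability of $P$ in $G_\circ$ only tells us that the missing neighbour $(1,0)$ has at most one neighbour among $(2,0),(1,1),(1,-1)$, whereas stability of $\tilde P$ in $\mathbb{Z}^2$ demands that it have none, since the freshly added origin already supplies one. I would attempt to close this gap by exploiting edge-connectivity of $P$ together with the forced presence of the other three origin-neighbours, hoping to pin down the local pattern around $(1,0)$ and show it is empty; this is where genuine geometric input rather than mere bookkeeping is required, and I expect it to be the crux. Should connectivity alone not exclude configurations in which $(1,0)$ retains a single lateral neighbour in $P$ (such patterns do seem possible, e.g.\ a stable $P$ meeting only three origin-neighbours yet containing $(1,1)$), the robust fallback is to replace $\tilde P$ by its $2$-neighbour closure in $\mathbb{Z}^2$: this closure is finite, being trapped inside the bounding box of the finite set $\tilde P$, it is stable by construction, and it still contains $P$, so the subsequent counting and probabilistic arguments can be carried out with it in place of $\tilde P$.
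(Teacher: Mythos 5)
Your reduction to the origin and its four neighbours is correct, and so are the three easy cases: regular with at most one origin-neighbour ($\tilde{P}=P$, the origin sees at most one tile of $P$), regular with all four present, and singular (since no origin is added there, no tile gains a neighbour, and a tile on the cut line has at most one neighbour inside a half-plane piece; note only that $\tilde{P}_1$ and $\tilde{P}_2$ need \emph{not} be disjoint --- for the L-shaped singular clusters they overlap in the corner block --- so your ``$P=\tilde{P}_1\sqcup\tilde{P}_2$'' is a slip). The important point, however, is that your suspicion about the crux is vindicated: the regular case with exactly three origin-neighbours is \emph{false as stated}, so the gap you honestly flagged cannot be closed by any amount of connectivity bookkeeping. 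Take
\[
P=\{(-1,1),\,(-1,0),\,(-1,-1),\,(0,1),\,(0,-1),\,(1,1)\}\subset G_\circ .
\]
This $P$ is edge-connected, and it is stable in $G_\circ$: every tile outside $P$ has at most one neighbour in $P$ (in particular $(1,0)$, whose $G_\circ$-neighbours are $(2,0),(1,1),(1,-1)$, sees only $(1,1)$). It is regular, containing three of the four origin-neighbours, so $\tilde{P}=P\cup\{(0,0)\}$; but in $\mathbb{Z}^2$ the tile $(1,0)$ now has the two neighbours $(0,0)$ and $(1,1)$ in $\tilde{P}$, so $\tilde{P}$ is not stable. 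This is exactly the pattern you predicted (``a stable $P$ meeting only three origin-neighbours yet containing $(1,1)$'').

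For comparison, the paper's own proof is a two-line sketch: it asserts that Lemma~\ref{lemma:technical_stablechain} transfers to $G_\circ$ (with the lines through the origin split into two half-chains) and that ``the conclusion follows.'' The transfer fails precisely next to the hole: the lemma's conclusion, that the edge path along the chain lies on the boundary of $S$, would force that path to run past the missing tile $(0,0)$, and the counterexample above lives exactly in that blind spot (the column $i=1$ touches the cluster on both sides of the hole). So your proposal is incomplete at the announced crux, but correctly so, and your fallback is the sound route: the $2$-neighbour closure of the connected set $\tilde{P}$ in $\mathbb{Z}^2$ is its bounding rectangle, hence finite, stable, and containing $P$. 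One caveat for the downstream use: the wall of $0$s in an actual blocked configuration must still be taken as the vertex-boundary of the genuine cluster (the closure may swallow tiles, like $(1,0)$ above, that are $0$), so the corollary should be restated as ``enclosed in a rectangle or L-shaped hexagon, possibly with a notch pinned at the origin,'' giving walls with at most $8$ rather than $6$ sides --- which leaves the polynomial counting and the polka dot argument intact.
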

\begin{proof}[Sketch of proof]
  Remark that Lemma \ref{lemma:technical_stablechain} holds in $G_\circ$, though in $G_\circ$ the horizontal line (resp. vertical line) going through $0$ is actually two disjoint half-chains.
  The conclusion follows.
\end{proof}

From this remark on $\tilde{P}$, we obtain immediatly the following corollary on the shape of stable clusters.
\begin{corollary}
  Let $P$ be a  subset of $G_\circ$ that is stable for 2-neighbour contamination.
  If $P$ is finite, then $\tilde{P}$ is enclosed in either a rectangle or a L-shaped hexagon.
  If $P$ is infinite, then $\tilde{P}$'s boundary contains an infinite half-chain.
\end{corollary}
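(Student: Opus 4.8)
The plan is to treat the regular and singular cases separately, in each case transporting the problem from $G_\circ$ to the full grid $\ZZ^2$ via the preceding lemma on the geometry of stable clusters on $G_\circ$, and then invoking Corollary~\ref{corollary:bootstrap_stability} with $d=2$ edge directions (so that a chain $2d$-gon is exactly an axis-parallel rectangle, and the infinite boundary statement yields an infinite half-chain).

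First I would handle the finite regular case. By the preceding lemma, $\tilde{P}$ is a finite set of tiles that is stable for $2$-neighbour contamination in $\ZZ^2$. Since $\ZZ^2$ has $d=2$ edge directions, Corollary~\ref{corollary:bootstrap_stability} tells us that $\tilde{P}$ is enclosed in a chain $2d$-gon of $0$s, that is a chain $4$-gon, which in $\ZZ^2$ is precisely a rectangle. This settles the regular case. Next, the finite singular case: here exactly two of the four neighbours of the origin lie in $P$, and $\tilde{P}=\tilde{P}_1\cup\tilde{P}_2$ with $\tilde{P}_i\subset H_i$. The preceding lemma gives that each $\tilde{P}_i$ is finite and stable in $\ZZ^2$, so the same application of Corollary~\ref{corollary:bootstrap_stability} encloses each $\tilde{P}_i$ in a rectangle $R_i$. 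It then remains to understand how $R_1$ and $R_2$ fit together: both abut the origin, which sits at a corner of each half-plane $H_i$. If the two retained neighbours are opposite, $H_1$ and $H_2$ are opposite half-planes and $R_1\cup R_2$ is contained in a single rectangle; if they are adjacent, $H_1$ and $H_2$ are perpendicular half-planes meeting only at the origin corner, and $R_1\cup R_2$ is an L-shaped hexagon whose unique reflex corner sits at the origin (the hole). In both cases $\tilde{P}=\tilde{P}_1\cup\tilde{P}_2$ is enclosed in a rectangle or an L-shaped hexagon, as claimed.

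For the infinite case I would argue uniformly. Whether $P$ is regular or singular, at least one of the associated $\ZZ^2$-sets --- namely $\tilde{P}$ in the regular case, or one of $\tilde{P}_1,\tilde{P}_2$ in the singular case --- is infinite and stable in $\ZZ^2$. Corollary~\ref{corollary:bootstrap_stability} then guarantees that its exterior boundary contains an infinite half-chain of $0$s; since this set is a subset of $\tilde{P}$ that shares the relevant portion of its boundary, the same half-chain lies in the boundary of $\tilde{P}$, which is exactly the desired conclusion.

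The main obstacle I expect is the bookkeeping in the singular case: one must check that the two rectangles $R_1,R_2$ genuinely glue into a single rectilinear hexagon (rather than overlapping so as to create additional corners), using that $\tilde{P}_1$ and $\tilde{P}_2$ live in distinct half-planes sharing only the origin corner, and one must confirm the degeneration to a plain rectangle in the opposite-neighbour configuration. A secondary subtlety, already used in the proof of the preceding lemma, is that the horizontal and vertical lines through the origin are each split by the hole into \emph{two} half-chains; this is precisely what permits a stable cluster to wrap around the origin and produce the L-shaped geometry instead of the usual rectangular one, so it must be invoked carefully when transporting Lemma~\ref{lemma:technical_stablechain} to $G_\circ$.
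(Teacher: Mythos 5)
Your proposal follows essentially the same route as the paper, which states the corollary as an immediate consequence of the preceding lemma: transport $\tilde{P}$ (regular case) or $\tilde{P}_1,\tilde{P}_2$ (singular case) to $\ZZ^2$, apply Corollary~\ref{corollary:bootstrap_stability} with $d=2$ so that chain $2d$-gons are rectangles and infinite stable clusters yield boundary half-chains, and glue the two singular rectangles at the origin corner into the L-shaped hexagon. The gluing bookkeeping you flag is genuine but is settled by stability itself (which excludes cross- or U-shaped unions, since any notch cell would acquire two contaminated neighbours), and the paper leaves the matter at the same level of detail as you do.
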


Here we call L-shaped hexagon the union of two rectangles of the form $([0,n_1]\times [0,m_1]) \cup ([0,n_2]\times [0,m_2])$, see Fig.~\ref{fig:possible_Gcirc_stable}.

\begin{figure}[htp]
  \center \includegraphics[width=0.8\textwidth]{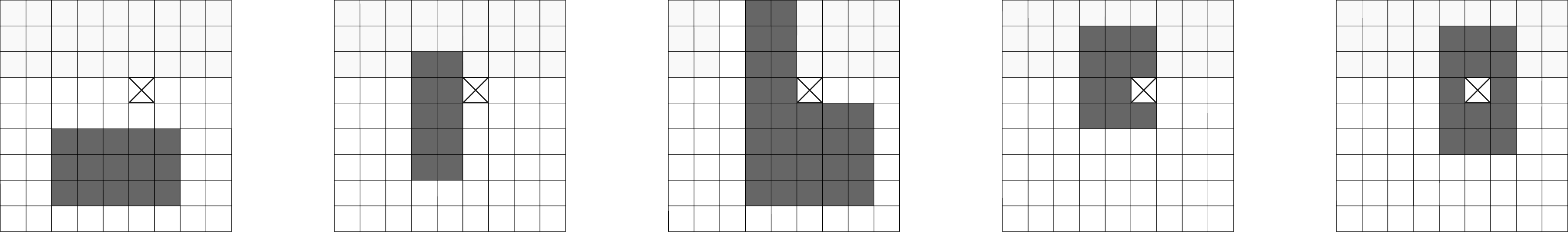}
  \caption{Possible stable clusters in the $G_\circ$, the crossed out cell is at position $(0,0)$.\\
  The middle one is the only singular stable cluster, it has a L-shaped hexagon shape.}
  \label{fig:possible_Gcirc_stable}

\end{figure}

As the number of ``sides'' of a ``wall'' around a finite contaminated cluster is bounded, then we obtain a polynomial upper bound on the number of possible walls of a given length as stated below.

\begin{corollary}[Counting possible finite clusters of a given perimeter]
  There is a polynomial $Q$ such that the number of possible walls of length $n$ around any given cell $t$ is at most $Q(n)$.
\end{corollary}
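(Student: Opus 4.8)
The plan is to reproduce, in this simpler setting, the counting argument of Lemma~\ref{lemma:counting}. By the corollary above, every finite stable cluster $\tilde P$ on $G_\circ$ is enclosed in either an axis-aligned rectangle or an L-shaped hexagon, so a wall of length $n$ around $t$ is the cycle of cells bounding one of these two shapes. There being only two combinatorial types, it suffices to bound, for each type separately, the number of walls of a given length $n$ that enclose the fixed cell $t$, and then add the two bounds.

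I would first count the possible \emph{shapes} of a given length. A rectangular wall is determined by its two side lengths, which are non-negative integers whose sum is fixed once $n$ is fixed; this leaves at most $O(n)$ choices. An L-shaped hexagonal wall is cut out by at most six axis-parallel segments, so it is determined by at most six non-negative integer side lengths; the number of ways to write $n$ as an ordered sum of at most six such lengths is at most the number of compositions of $n$ into six parts, which is $\binom{n+5}{5}=O(n^{5})$. Hence in both cases the number of shapes of length $n$ is bounded by a fixed polynomial in $n$.

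Next I would bound the number of admissible \emph{placements}. Since the wall encloses the fixed cell $t$ and consists of $n$ cells, every cell of the wall is at graph distance at most $n$ from $t$ in $\ZZ^2$, hence at Euclidean distance at most $n$; in particular a chosen reference corner of the wall lies in the disk of radius $n$ about $t$, which contains $O(n^{2})$ lattice cells. Thus, for each of the polynomially many shapes, there are at most $O(n^{2})$ placements that enclose $t$. Multiplying the number of shapes by the number of placements yields a bound of the form $Q(n)$ for a fixed polynomial $Q$, uniform in $t$.

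The argument is entirely routine; the only point needing a little care --- rather than any real difficulty --- is to make sure that no geometric wall is counted twice and that the degenerate L-shaped hexagons (those that collapse to rectangles) are not double-counted across the two types. Since discarding such overlaps only decreases the count, they are harmlessly absorbed into the constant of $Q$, exactly as in Lemma~\ref{lemma:counting}, where the resulting polynomial depended only on the underlying tileset (here the unit square) and not on $t$.
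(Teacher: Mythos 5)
Your proof is correct and follows essentially the same route as the paper, which obtains this corollary directly from the rectangle/L-shaped-hexagon classification by the counting argument of Lemma~\ref{lemma:counting}: polynomially many shapes (a bounded number of integer side lengths summing to $n$) times $O(n^2)$ placements of a reference cell within distance $n$ of $t$. Nothing further is needed.
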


With these intermediate statements, we can now apply the proof strategy from Section \ref{sec:probargs} as outlined below

\begin{proof}[Outline of proof of Proposition \ref{prop:bootstrapwithhole} of almost sure invasion on $G_\circ$]
We can now apply the same techniques as for the rhombus case.
The set $\chainzero$ of configurations containing at least an infinite half-line or half-column in state $0$ is of measure $0$.

Define $E_t$ as the set of configurations where the tile $t$ is enclosed in a finite wall.

Denote $\blocking:= \invasion^\complement \cap \chainzero^\complement$.

We have $\blocking \subset \bigcap\limits_{t\in G_\circ} E_t$.

Now we have a uniform bound $\lambda<1$ on the measure of $E_t$, and a uniform approximation on $E_{t,>n}$.

We apply the polka dot technique of Lemma \ref{lemma:mublockingzero} to obtain $\mu(\blocking)=0$, which implies $\mu(\invasion_\circ^\complement)=0$ and $\mu(\invasion_\circ)=1$.
\end{proof}

This result implies that, on $\tiling_f$, almost surely all non-fortress cells eventually get contaminated.
As a consequence, full contamination happens almost surely as long as the fortress is initially disarmed (meaning it initially contains at least one contaminated cell), indeed a disarmed fortress gets contaminated if it is surronded.

Note that for a \MNVPC{} measure, we similarly have $\mu(\invasion)=\mu(\mathfrak{F})$ by a very similar proof.

Note that the proof outlined here also works for any single finite rectangular (or sufficiently well behaved)  hole (or fortress) in a square grid.

\section{A percolation process on rhombus tilings without $0-1$-law}
\label{appendix:perco-rhombus}
In this appendix we quickly show an example of non uniformly repetitive rhombus tiling $\tiling$ and a specific percolation process $F$ such  that $\tiling$ contains a single fortress for $F$ leading to a non $0-1$ probability of percolation for any parameter $0<p<1$.

Here we consider rhombus tilings with 5 edge directions $\zeta_i$ for $i=0...4$ (as for Penrose tilings).
In this setting we consider the specific tiling $\tiling$ of Fig.~\ref{fig:appendix-perco-rhombus}.

On this tiling, we have ten possible adjacency directions for tiles which are $\pm \zeta_i^\bot$ for $i=0...4$.
We say that $t'$ is adjacent to $t$ in direction $a$ if $t\cap t' \equiv a^\bot$ and the vector $\vec{tt'}$ from the barycenter of $t$ to that of $t'$ has positive scalar product with $a$.

We call \emph{partially directed boostrap percolation} $F_A$ the percolation process $F$ where a tile gets contaminated if it has at least two contaminated neighbours in directions $A \subset \{\pm \zeta_i^\bot \mid 0\leq i < 5\}$.
We consider the specific case where $A_3 = \{\zeta_0^\bot, \zeta_1^\bot, \zeta_2^\bot,\pm \zeta_3^\bot, \pm \zeta_4^\bot\}$
and the percolation process $F_3$ associated with $A_3$, that is:

\[F_3(c)_t =
  \begin{cases}
    1 &\text{ if $c_t=1$ or $t$ has at least two neighbours $t'$ and $t''$ in}\\
    	& \text{directions from $A$ with $c_{t'}=1 \wedge c_{t''}=1$}\\
    0 &\text{ else,}
  \end{cases}
\]

This means a given tiles gets contaminated if two of its neighbours are contaminated in the previous step, but not counting neighbours in directions $\{-\zeta_0^\bot, -\zeta_1^\bot, -\zeta_2^\bot\}$.

\begin{figure}[htp]
  \center \includegraphics[width=0.6\textwidth]{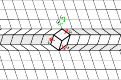}
  \caption{A rhombus tiling.
    Lines above and below the ones shown are made of the same rhombus tiles as the ones in white.\\
    The adjacency direction of partially directed bootstrap percolation are represented with the red and green arrows: green arrows are bidirectionnal contamination directions, red are directed contamination directions.\\
  The ``cube'' pattern with bold boundaries is a fortress as each tile only has one neighbour outside the pattern in a direction of contamination.}
  \label{fig:appendix-perco-rhombus}
\end{figure}

One can check that the only fortress for $F_3$ in $\tiling$ is the ``cube'' at the center of Figure \ref{fig:appendix-perco-rhombus} which is at the crossing of the $\zeta_0$, $\zeta_1$ and $\zeta_2$ lines (any other finite set of tiles $S$ contains a tile with two neighbours outside $S$ in directions from $A$).

Note that outside this singular band of tiles, in the white zones $F_3$ behaves as two neighbour percolations on $\mathbb{Z}\times \mathbb{N}$ as only directions $\zeta_3$ and $\zeta_4$ are present; and so almost surely the two half planes outside the band get invaded. This means that the right part of the band almost surely gets contaminated.
Indeed, if the outside half planes are fully $1$s and the grey chains on the right each contain a contaminated cell (say at positions $n_1$ and $n_2$, with position $0$ being the closest to the central ``cube'') then both grey chains get contaminated from $0$ to that contaminated position. Consequently the band gets fully contaminated from the fortress to the position $\min(n_1,n_2$).
Since for every $N$, there almost surely exist $n_1,n_2\geq N$ such that the grey cell of the top chain at position $n_1$ and of the top chain at position $n_2$ are contaminated, it follows that the whole right band almost surely gets contaminated.

So if the fortress has been disabled the entirety of the left part of the band gets contaminated, but if the fortress has not been disabled then the full invasion does not occur.

Overall the probability of invasion is exactly the probability that the fortress initally contains a contaminated tile; this probability is not in $\{0,1\}$ for non trivial Bernoulli distributions.

\printbibliography
\end{document}